\theoremstyle{definition}
\newtheorem{remark}{Remark}
\newtheorem{lemma}{Lemma}
\newtheorem{observation}{Observation}
\newtheorem{proposition}{Proposition}
\newtheorem{theorem}{Theorem}
\newtheorem{corollary}{Corollary}
\newtheorem{procedure}{Procedure}
\crefname{proposition}{Proposition}{Propositions}
\crefname{theorem}{Theorem}{Theorems}
\crefname{definition}{Definition}{Definitions}
\crefname{lemma}{Lemma}{Lemmas}
\crefname{figure}{Figure}{Figures}
\crefname{corollary}{Corollary}{Corollary}
\crefname{conjecture}{Conjecture}{Conjectures}
\crefname{section}{Section}{Sections}
\crefname{appendix}{Appendix}{Appendixes}
\crefname{observation}{Observation}{Observation}
\crefname{remark}{Remark}{Remark}
\crefname{example}{Example}{Examples}
\crefname{equation}{Eq.}{Eqs.}
\crefname{table}{Table}{Tables}
\crefname{procedure}{Procedure}{Procedures}
\newcommand{\blue}[1]{\textcolor{blue}{#1}}
     \renewcommand\@make@capt@title[2]{%
      \@ifx@empty\float@link{\@firstofone}{\expandafter\href\expandafter{\float@link}}%
       {\textbf{#1}}\@caption@fignum@sep#2\quad}%
\renewcommand{\fnum@figure}{\textbf{Figure~\thefigure}}
\newcommand{\SL}[0]{\text{SL}}
\newcommand{\SLC}[0]{\text{SL(}2,\mathbb{C}\text{)}}
\newcommand{\SLIP}[0]{\text{SLIP}}
\newcommand{\be}[0]{\beta}
\newcommand{\si}[0]{\sigma}
\newcommand{\Id}[0]{\text{Id}}
\newcommand{\bs}[1]{\textbf{#1}}
\newcommand{\ea}[1]{\begin{align}#1\end{align}}
\newcommand{\eq}[1]{\begin{equation}#1\end{equation}}
\newcommand{\ma}[1]{\mathcal{#1}}
\begin{document}
\title{
Entanglement Classification via Single Entanglement Measure
}
\author{Adam Burchardt}
\email{adam.burchardt.uam@gmail.com}
\affiliation{QuSoft, CWI and University of Amsterdam, Science Park 123, 1098 XG Amsterdam, the Netherlands}
\author{Gon\c{c}alo M. Quinta}
\affiliation{Instituto de Telecomunica\c{c}\~{o}es, Physics of Informations and Quantum Technologies Group, Lisboa, Portugal}
\author{Rui Andr\'{e}}
\affiliation{Centro de Astrof\'{\i}sica e Gravita\c c\~ao  - CENTRA,
Departamento de F\'{\i}sica, Instituto Superior T\'ecnico - IST,
Universidade de Lisboa - UL, Av. Rovisco Pais 1, 1049-001 Lisboa, Portugal}


\begin{abstract}

We show that a single polynomial entanglement measure is enough to verify equivalence between generic $n$-qubit states under Stochastic Local Operations with Classical Communication (SLOCC). SLOCC operations may be represented geometrically by Möbius transformations on the roots of the entanglement measure on the Bloch sphere. Moreover, we show how the roots of the 3-tangle measure classify 4-qubit generic states, and propose a method to obtain the normal form of a 4-qubit state which bypasses the possibly infinite iterative procedure.
\end{abstract}

\maketitle


\section{Introduction}
\label{sec:Introduction}
Quantum entanglement is one of the key manifestations of quantum mechanics and the main resource for technologies founded on quantum information science. In particular, quantum states with non-equivalent entanglement represent distinct resources which may be useful for different protocols. The idea of clustering states into classes exhibiting different qualities under quantum information processing tasks resulted in their classification under stochastic local operations assisted by classical communication (SLOCC). Such a classification was successfully presented for two, three and four qubits \cite{ThreeQub, FourQubits, SLOCCallDim,4QubitsAllClasses}.
However, the full classification of larger systems is completely unkown. Even the much simpler problem of detecting if two $n$-qubit states ($n>4$) are SLOCC-equivalent is, in general, quite demanding \cite{PolInv4qubits,Zhang_2016,KempfNessToEntanglement,BurchardtRaissi20}.

Among several approaches to the entanglement quantification and classification problem, a particularly useful one is via \textit{$\SL$-invariant polynomial} ($\SLIP$) measures.
Well-known examples are concurrence and $3$-tangle, which measure the 2-body and 3-body quantum correlations of the system \cite{PhysRevLett.80.2245,DistributedEntanglement}.
$\SLIP$ measures provide not only a convenient method for entanglement classification but also its practical detection. Indeed, it was shown that almost all SLOCC equivalence classes can be distinguished by ratios of such measures \cite{SLOCCallDim}. Any given two n-qubit states are then SLOCC-equivalent if a \textit{complete set of SLIP measures} has the same values for both of them \cite{PolInv4qubits}. For more than four qubits, however, the size of such a set grows exponentially, making it intractable to use this approach to discriminate SLOCC-equivalent states with more than four qubits \cite{Love07}.

In this paper, we show that, contrary to intuition, a single $\SLIP$ measure is enough to verify the SLOCC-equivalence between any two generic pure $n$-qubit states. By generic state, we mean the set of all pure states except of the measure zero subset with respect to the Haar measure. We achive this by using a mathematical trick, in particular, we use a $\SLIP$ measure defined for $n$ qubit systems to verify equivalence of systems of $n+1$ quibits. In essence, we look how the roots of the SLIP measure for those states behave under SLOCC. We show that if the states are SLOCC-equivalent, then the roots of the SLIP measure for each state must be related by a Möbius transformation, which is straightforward to verify. In particular, we use this procedure to show that the 3-tangle measure is enough to discriminate between generic 4-qubit states. Finally, we show how one may use the roots of a SLIP entanglement measure to obtain the normal form of a 4-qubit state which bypasses the possibly infinite iterative standard procedure. 

\section{Polynomial Invariant Measures}
\label{sec:Polynomial Invariant Measures}

An \textit{entanglement measure} is a function $E(\ket{\psi})$ defined for pure states of $n$ qubits which vanishes on the set of separable states.
One of the desired features of entanglement measures is invariance under SLOCC operations.
Mathematically, a SLOCC operation might be uniquely determined by the action of local invertible operators $L \in \SLC^{\otimes n}$ \cite{ThreeQub}.
An entanglement measure $E$ defined for all pure states of $n$ qubits is called a \textit{$\SL$-invariant polynomial of homogeneous degree $h$} if it is homogeneous polynomial of degree $h$ in the state coefficients, an it is invariant under any local operation $\mathcal{O}=\mathcal{O}_1\otimes\cdots\otimes \mathcal{O}_n$, where $\mathcal{O}_i \in \SLC$. 
It is easy to see that those two conditions are equivalent to the fact that $E$ satisfies
\begin{equation*}
E\big( \kappa\, \mathcal{O} \ket{\psi} \big) =\kappa^h  E\big(\ket{\psi} \big)
\end{equation*}
for each real constant $\kappa > 0$ and invertible linear operator $\mathcal{O}$~\cite{ThreeQub,SLOCCallDim,Eltschka_2014}. Such polynomial will be denoted by  $\SLIP_n^h$, where the upper index indicates the degree of the polynomial and the lower index is related to the number of qubits. Well-known examples are concurrence and $3$-tangle \cite{PhysRevLett.80.2245,DistributedEntanglement}. 
Both of those measures, concurrence and $3$-tangle, can be written as the absolute value of the anti-linear expectation value of simple operators. Indeed, for a two-qubit pure state $\ket{\psi}=c_{00}\ket{00} +c_{01}\ket{01}+c_{10}\ket{10}+c_{11}\ket{11}$ its concurrence reads as 
$
C(\ket{\psi})
=| c_{00}c_{11}-c_{01}c_{10}|
=
\big\vert \bra{\psi}\sigma_y \otimes \sigma_y \ket{\bar{\psi}} \big\vert
$. 
Furthermore, the 3-tangle $\tau^{(3)} $ defined for 3-qubit states $\ket{\psi}\in \mathcal{H}_2^3$ takes relatively simple form:
\begin{equation}
\label{3tangle}
\tau^{(3)} \big(\ket{\psi}\big)
=
\Bigg\vert 
\sum_{j=\Id,x,y,z} \eta_j 
\Big(
\bra{\psi}\sigma_j \otimes\sigma_y \otimes \sigma_y 
\ket{\bar{\psi}} 
\Big)^2
\Bigg\vert ,
\end{equation}
with notation $(\eta_\Id,\eta_x,\eta_y,\eta_z ) =( -1,0,1,1)$. Moreover, the degree-4 polynomial invariants for 4 qubits described by Luque and Thibon in Ref~\cite{PhysRevA.67.042303} can be also written as similar expressions \cite{PhysRevA.85.022301}. 
This simple idea of exploring the anti-linear expectation value of the tensor product of Pauli operators was further used for constructing invariants of an arbitrary number of qubits \cite{PhysRevA.72.012337,Eltschka_2014,PhysRevA.85.022301}. For example, the following formulas
\begin{align}
\label{(2k+1)}
E^{(2k+1)}& \big(\ket{\psi}\big)=
\\ 
\nonumber
=&
\Bigg\vert 
\sum_{j=\Id,x,y,z} \eta_j \;
\Big(
\bra{\psi}\sigma_j \otimes \underbrace{\sigma_y \otimes\cdots\otimes \sigma_y}_{2k} 
\ket{\bar{\psi}} 
\Big)^2
\Bigg\vert ,
\end{align}
\begin{align}
\label{(2k)}
E&^{(2k)} \big(\ket{\psi}\big)=
\\ 
\nonumber
=&
\Bigg\vert 
\sum_{j, i=\Id,x,y,z} \eta_j \; \nu_i \;
\Big(
\bra{\psi}\sigma_j \otimes \sigma_i \otimes\underbrace{\sigma_y \otimes\cdots\otimes \sigma_y}_{2k} 
\ket{\bar{\psi}} 
\Big)^2
\Bigg\vert ,
\end{align}
with $(\eta_\Id,\eta_x,\eta_y,\eta_z )=(\nu_\Id,\nu_x,\nu_y,\nu_z ) =( -1,0,1,1)$ are a degree-4 $\SLIP$ measure for odd and even number of qubits respectively \cite{PhysRevA.85.022301}.

A general method to construct various $\SLIP$ measures based on this simple idea of exploring the anti-linear expectation value of the tensor product of Pauli operators was further invented in Ref~\cite{PhysRevA.72.012337,Eltschka_2014,PhysRevA.85.022301}. In particular

Any SLIP measure $E$ can also be extended to mixed states by determining the largest convex function on the set of mixed states which coincides with $E$ on the set of pure states  \cite{Uhlmann98}. Despite its simple definition, the evaluation of a convex roof extension requires non-linear minimization procedure, and for a general density matrix is a challenging task \cite{PhysRevLett.114.160501,Osborne_2006,Regula_2014,Regula_2016a}. An attempt to address this challenging task was carried out by introducing the so-called \textit{zero-polytope}, the convex hull of pure states with vanishing $E$ measure \cite{OsterlohTangles,OsterlohWernerStates,
OsterlohExactZeros,Osterloh3}. In the particular case of rank-2 density matrices $\rho $, the zero-polytope can be represented inside a Bloch sphere, spanned by the roots of $E$ \cite{OsterlohWernerStates,RegulaGeoTanglePRL}. We adapt this approach focusing only on the roots of polynomial invariants, equivalently the vertices of the zero-polytope.

\section{System of roots}
\label{sec:System of roots}

Consider a $(n+1)$-partite qubit state $\ket{\psi}$.
The state  $\ket{\psi}$ can be uniquelly written as
\begin{equation}
\label{Peq1}
\ket{\psi}=\ket{0} \ket{\psi_0} + \ket{1}\ket{\psi_1}\,,
\end{equation} 
providing the canonical decomposition of the reduced density matrix $\rho =\ket{\psi_0}\bra{\psi_0} +\ket{\psi_1}\bra{\psi_1}$ obtained by tracing out the first qubit.
Note that the states $\ket{\psi_0}$ and $\ket{\psi_1}$ are in general neither normalized nor orthogonal. Consider now the family of states
\begin{equation}
\label{psi0psi1}
\ket{\psi_z} = z \ket{\psi_0} +  \ket{\psi_1},
\end{equation}
where $z$ is taken from the extended complex plane $\hat{\mathbb{C}}$, i.e., complex numbers plus infinity. We denote this the \textit{extended plane representation}. In addition, consider any $\SLIP_n^h$ measure $E$ defined on the set of $n$-partite pure qubit states.
Since $E$ is polynomial in the coefficients of $\ket{\psi_z}$, it is also polynomial in the complex variable $z$  \cite{OsterlohTangles}.
Therefore, the polynomial $E( z\ket{\psi_0} + \ket{\psi_1} )$ has exactly $h$ roots: $\zeta_1 ,\ldots ,\zeta_h $ (which may be degenerated and/or at infinity), related to the degree of $E$. By using the complex number $z$, the states $\ket{\psi_z}$ can be mapped to the surface of a sphere via the standard \textit{stereographic projection}
$(\theta,\phi ):=
(\text{arctan} \,1/ |z|,\; -\text{arg}\,z )$
written in spherical coordinates.
This way, a point on the unit 2-sphere $(\theta,\phi)$ can be associated with the quantum state
\begin{equation}
\ket{\widetilde{\psi}_{z}} :=
 \text{cos} \dfrac{\theta}{2} \ket{\psi_0} +
 \text{sin} \dfrac{\theta}{2} e^{i \phi} \ket{\psi_1}
\end{equation}
with $z =\text{ctg}({\theta}/{2}) \: e^{-i \phi}$, such that $\ket{\psi_0}$ lies in the North pole and $\ket{\psi_1}$ lies in the South pole, see \cref{Fig:StereoProjection}. We denote this the \textit{Bloch sphere representation}. Note that ${\ket{\widetilde{\psi}_{z}} \propto \ket{\psi_{z}}}$ and that neither of these states is normalized, since $\ket{\psi_0}$ and $\ket{\psi_1}$ are not normalized in general either.

\begin{figure}[h!]
\includegraphics[width=.9\columnwidth]{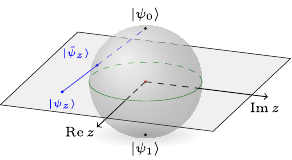}
\caption{The stereographic projection relating the family of states $\ket{\psi_z}$ on the extended complex plane with the associated family of states $\ket{\widetilde{\psi_z}}$ on the Bloch sphere. The spherical coordinates $({\theta},\phi)$ and the complex coordinate $z$ are related via the stereographic projection $z =\text{ctg} ({\theta}/{2}) \: e^{-i \phi}$.
}
\label{Fig:StereoProjection}
\end{figure}

\subsection{Local operations on the system of roots}
\label{sec:Local operations on the system of roots}

To each linear invertible operator $\mathcal{O} =\begin{psmallmatrix}
a&b\\
c&d
\end{psmallmatrix}$, one may associate a \textit{Möbius transformation} $z\mapsto z':= \frac{az+b}{cz+d}$, mapping the extended complex plane $\hat{\mathbb{C}}$ into itself \cite{bengtsson_zyczkowski_2006,doi:10.1142/S0219749912300045}.
The composition of such transformations represents the multiplication of the associated operators.
In particular, $z\mapsto z':= \frac{dz-b}{-cz+a}$ is an inverse Möbius transformation related with
$\mathcal{O}^{-1} =
\begin{psmallmatrix}
d&-b\\
-c&a
\end{psmallmatrix}$.
Note that although Möbius transformations are typically represented on the extended complex plane, one may represent them as transformations on the Bloch sphere via the stereographic projection.
The correspondence between invertible operators and Möbius transformations represented on the Bloch sphere was already successfully used for SLOCC classification of permutation-symmetric states \cite{Bastin_2009,CrossRatioNqubits,Three-tangleMajorana}.

To study the effect of SLOCC operations on the system of roots we begin by acting on the first qubit of a state $\ket{\psi}$ written in the form of Eq.~(\ref{Peq1}) with an invertible linear operator
$\mathcal{O}$. In terms of the family of states $\ket{\psi_{z}}$ in Eq.~(\ref{psi0psi1}), this operation induces the map
\begin{equation}
\label{psiMob}
\ket{\psi_{z}} \mapsto \ket{\psi_{z'}} = \frac{az+b}{cz+d} \ket{\psi_0} + \ket{\psi_1}\,,
\end{equation}
i.e. the index is mapped via the Möbius transformation $z\mapsto z':= \frac{az+b}{cz+d}$, see \cref{TheoremProof}. In addition, since ${\ket{\widetilde{\psi}_{z}} \propto \ket{\psi_{z}}}$, we have that the family of states $\ket{\widetilde{\psi}_{z}}$ also transforms according to Eq.~(\ref{psiMob}). This reflects the fact that the states $\ket{\psi_z}$ and $\ket{\widetilde{\psi}_z}$ associated to the extended complex plane and the Bloch sphere are related by a stereographic projection of the variable $z$. Using Eq.~(\ref{psiMob}) and the defining equation $E (\zeta_i \ket{\psi_0}+ \ket{\psi_1})=0$ for the roots $\zeta_i$ of the polynomial $E$, one concludes that the roots transform according to the inverse Möbius
transformation associated to the operator $\ma{O}$, i.e. $\zeta_i \mapsto  \frac{d\zeta_i-b}{-c\zeta_i+a}$.
Finally, although the system of roots changes with local operations acting on the qubit that is being traced out in Eq.~(\ref{Peq1}), it is invariant under local operations acting on any other qubit since the polynomial $E$ is $\SLC^{\otimes n}$ invariant.
We summarize these results in the theorem below, see \cref{TheoremProof} for a detailed proof.

\begin{theorem}
\label{T1}
Consider an $(n+1)$-partite pure quantum state $\ket{\psi} =\ket{0}\ket{\psi_0}+\ket{1}\ket{\psi_1}$.
The roots $\zeta_i$ of any $\SLIP_n^h$ entanglement measure associated to the partial trace of the first qubit:
\begin{enumerate}
\item are invariant under invertible operators, i.e. invariant under $\textbf{1}\otimes\mathcal{O}_{\vec{n}} \in \SLC^{\otimes n}$ operators;
\item transform via an inverse Möbius transformation $\zeta_i '=\frac{d\zeta_i-b}{-c\zeta_i+a}$ w.r.t the
$\mathcal{O} =\begin{psmallmatrix}
a&b\\
c&d
 \end{psmallmatrix}\otimes \textbf{1}^n$ operator.
\end{enumerate}
\end{theorem}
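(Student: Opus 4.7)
The plan is to track how the family of states $\ket{\psi_z}$ from Eq.~(\ref{psi0psi1}) transforms under each of the two classes of local operators, push that transformation through the polynomial $E$, and then read off the root set of $E(\ket{\psi_z})$ as a polynomial in $z\in\hat{\mathbb{C}}$.

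For part 1, I would take $\textbf{1}\otimes\mathcal{O}$ with $\mathcal{O}\in \text{GL}(2,\mathbb{C})^{\otimes n}$ and apply it to $\ket{\psi}=\ket{0}\ket{\psi_0}+\ket{1}\ket{\psi_1}$. Because the operator acts trivially on the first qubit, the decomposition in Eq.~(\ref{Peq1}) simply updates to $\ket{\psi'_0}=\mathcal{O}\ket{\psi_0}$ and $\ket{\psi'_1}=\mathcal{O}\ket{\psi_1}$, so the whole family satisfies $\ket{\psi'_z}=\mathcal{O}\ket{\psi_z}$. The SLIP property then yields $E(\ket{\psi'_z})=\kappa\,E(\ket{\psi_z})$ with a constant $\kappa$ (a power of the product of determinants) that is independent of $z$, so the polynomial in $z$ is merely rescaled and its zero set is unchanged.

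For part 2, the key calculation is the one already hinted at by Eq.~(\ref{psiMob}). I would apply $\mathcal{O}\otimes\textbf{1}^{n}$ with $\mathcal{O}=\begin{psmallmatrix}a&b\\c&d\end{psmallmatrix}$, expand $(\mathcal{O}\ket{0})\ket{\psi_0}+(\mathcal{O}\ket{1})\ket{\psi_1}$, and regroup the result into the canonical form $\ket{0}\ket{\psi'_0}+\ket{1}\ket{\psi'_1}$. Substituting the new components into Eq.~(\ref{psi0psi1}) and factoring out the scalar $cz+d$ from $\ket{\psi'_z}$, I would obtain the proportionality $\ket{\psi'_z}=(cz+d)\,\ket{\psi_{z'}}$ with $z'=(az+b)/(cz+d)$. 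Homogeneity of degree $h$ then gives $E(\ket{\psi'_z})=(cz+d)^{h}\,E(\ket{\psi_{z'}})$.

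To conclude, I would read off the roots of the left-hand side as a polynomial in $z$. The remaining factor $E(\ket{\psi_{z'}})$ vanishes precisely when $z'=\zeta_i$ for some root $\zeta_i$ of the original polynomial, and inverting $z'=(az+b)/(cz+d)$ yields $z=(d\zeta_i-b)/(-c\zeta_i+a)$, which is exactly the claimed inverse Möbius rule. The main obstacle, and really the only subtle point, is the bookkeeping around the scalar prefactor $(cz+d)^{h}$ and roots at infinity: the zeros $z=-d/c$ introduced by that prefactor are precisely the preimages of $z'=\infty$, so any root $\zeta_i=\infty$ of the original polynomial is correctly accounted for only when $z$ is treated as an element of the Riemann sphere $\hat{\mathbb{C}}$ and degrees are read in that compactified sense. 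Once this is observed, Möbius bijectivity on $\hat{\mathbb{C}}$ delivers the covariant transformation of the full multiset of roots.
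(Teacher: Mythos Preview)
Your proposal is correct and follows essentially the same route as the paper's own proof: for part~1 you push $\mathcal{O}_{\vec n}$ through the family $\ket{\psi_z}$ and invoke the $\text{SL}$-invariance of $E$, and for part~2 you regroup $(\mathcal{O}\otimes\textbf{1}^n)\ket{\psi}$ into new components $\ket{\psi'_0},\ket{\psi'_1}$, observe $\ket{\psi'_z}\propto\ket{\psi_{z'}}$ with $z'=(az+b)/(cz+d)$, and invert the Möbius map to read off the transformed roots. Your explicit tracking of the prefactor $(cz+d)^h$ and the associated bookkeeping at $z=\infty$ is a welcome addition that the paper handles more informally.
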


\noindent
It is crucial to emphasize that normalizing the states $\ket{\psi_0}$ and $\ket{\psi_1}$ after the action of the operator $\ma{O}$, as is the case in existing related works \cite{OsterlohTangles,OsterlohWernerStates,Osterloh3,RegulaGeoTanglePRL,RegulaGeoTanglePRA},
would spoil the mapping of \cref{psiMob}.
As a consequence, the action of SLOCC operators on the states $\ket{\psi_z}$ would no longer be given by the corresponding Möbius transformation, and the statements in \cref{T1} would no longer hold.

The decomposition (\ref{Peq1}) can be performed with respect to any other subsystem, each with its own system of roots. Any local operator $\mathcal{O}_k
=\begin{psmallmatrix}
a&b\\
c&d
\end{psmallmatrix}$ acting on the $k$-th qubit will influence independently the corresponding $k$-th system of roots via the Möbius transformation $\zeta_i \mapsto  \frac{d\zeta_i-b}{-c\zeta_i+a}$. On the other hand, if acting globally with a local operator $\mathcal{O}_1 \otimes \cdots \otimes \mathcal{O}_{n+1}$, all roots (and thus all zero-polytopes) will be affected.
Since a Möbius transformation is a bijective mapping on the Bloch sphere, the total number of roots will always be preserved \cite{RegulaGeoTanglePRA}.
Moreover, the existence of a local transformation between two given states becomes straightforward to verify since Möbius transformations are fully classified.

\begin{figure*}
\includegraphics[scale=1.6]{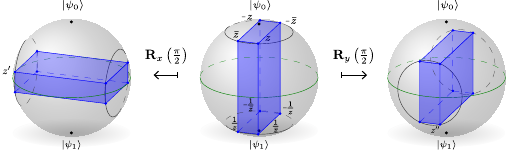}
\caption{A normal system of roots $z,1/z,-z,-1/z$ together with the conjugate points $\bar{z},1/\bar{z},-\bar{z},-1/\bar{z}$ span the cuboid whose faces are parallel to the $XZ$, $XY$ and $YZ$ planes.
There are 24 rotations of the Bloch sphere which preserve this property, composing the elements of the group $\ma{G}_{24}$.
Two of them, namely the rotation by a $\pi /2$ angle around $X$ and $Y$ axes are presented.
The system of roots transforms according to Eqs.~\ref{Indeed}-\ref{Indeed3}, giving
$z\mapsto z':= \frac{z-i}{-iz+1}$ and $z\mapsto z'':= \frac{z-1}{z+1}$ for the two rotations.
}
\label{G24}
\end{figure*}

\section{Verification of SLOCC-equivalence}
\label{sec:Verification of SLOCC-equivalence}

\cref{T1} provides a solution for the problem of discriminating quantum states up to SLOCC-equivalence. To verify if two pure states are SLOCC-equivalent, one can use the following procedure, which takes a single $\SLIP_n^h$ measure and two pure $(n+1)$-qubit states as an input. In the generic situation, it returns a set of at most $(3! {{h}\choose{3}})^{n+1}$ SLOCC operators as an output. The input states are SLOCC-equivalent iff they are interconnected by one of the output operators.

\begin{procedure}
\label{procedure}
Choose any $\SLIP_n^h$ entanglement measure of degree $h \geq 3$ and two $(n+1)$-qubit states.
\begin{itemize}
\item[\textbf{1)}] Calculate the roots of a chosen measure for each subsystem for both states. If for any subsystem the number of roots of both states is not equal, such states are not SLOCC-equivalent. Otherwise, denote by $h_i$ the number of roots of both states calculated for $i$-th subsystem. If  $h_i \geq 3$ for all $1\leq i\leq n+1$, the procedure will be conclusive. Note that for a generic state $h_i=h\geq 3$ for each subsystem $i$.
\item[\textbf{2)}] Focus on one subsystem $i$, $1\leq i\leq n+1$, and determine all Möbius transformations which transforms the roots of the first state into roots of the second state. For example, choose three of the $h_i$ roots from each state and write the unique Möbius transformations between the two triplets of roots. Repeat this for all $3! {{h_i}\choose{3}}$ possibilities of choosing $3$ out of $h_i$ roots for the second state. Derive the local operators $\ma{O}_i$ associated to Möbius transformations.
\item[\textbf{3)}] Repeat step 2) for all other subsystems and then consider the tensor products $\ma{O}_1 \otimes \dots \otimes \ma{O}_{n+1}$ of all the local operators obtained.
\item[\textbf{4)}] If the two given $(n+1)$-qubit states are SLOCC-equivalent, at least one of these operators must transform one state into the other (up to the normalization). Otherwise, they are not SLOCC-equivalent.
\end{itemize}
\end{procedure}



\begin{proposition}
\label{proppppp1}
Consider any $\SLIP_n^h$ measure and two $(n+1)$-qubit states. If both states have at least $3$ roots with respect to each subsystem, they are SLOCC-equivalent iff they are interconnected by one of the operator obtained as an outcome of \cref{procedure}.
\end{proposition}

\noindent
We refer to \cref{22november} for a proof of this statement. 
Intuitively, a generic pure quantum state shall have exactly $h$ distinct roots for each subsystem, where $h$ is the degree of the $\SLIP_n^h$ measure. Thus, any such measure of degree $h\leq 3$ should be sufficient to verify SLOCC-equivalence between generic quantum states. We confirmed this intuition by looking at several examples of degree-$4$ measures. In particular, \cref{(2k),(2k+1)} provides a family of $\SLIP$ measures for arbitrary number of qubits. 
We examined those measures, and for each number $4 \leq n \leq 20$ of qubits, we generated a sample of a thousand $(n+1)$-qubit states and each state had, indeed, four distinct roots with respect to any subsystem \cite{commentSLIPgenerate}. Furthermore, we confirm this numerical result analytically and proved that a generic pure quantum state has, indeed, four distinct roots for each subsystem for measures defined via \cref{(2k),(2k+1)}. By generic state, we mean the set of all pure states except of the measure zero subset with respect to the Haar measure, see \cref{22Jun} for more details. This result applies for an arbitrary number of qubits, thus proving the following proposition.

\begin{proposition}
\label{easyBusyMainText}
A single SLIP measure is enough to provide necessary and sufficient conditions for any two generic pure $n$-qubit states to be SLOCC-equivalent.
\end{proposition}

\noindent
We refer to \cref{Section 5} for a technical proof of the above statement. 



\section{Normal system of roots}
\label{sec:Normal system of roots}

In the previous section we showed that in principle any SLIP measure of degree  $h\geq 3$ can be used to verify SLOCC-equivalence. In this section, we consider the special case when such a measure has degree $h=4$.  

Firstly, recall that any three distinct points on the sphere can be transformed onto any other three distinct points via a unique Möbius transformation, see \cref{11november}. While this is not the case for four points, it is possible to take any four complex points $z_1,z_2,z_3, z_4$ and associate a so-called \textit{cross-ratio}
\begin{equation}
\label{CRR}
\lambda
\big(z_1,z_2,z_3,z_4 \big):=
\dfrac{z_3 -z_1}{z_3 -z_2}\dfrac{z_4 -z_2}{z_4 -z_1}\,,
\end{equation}
which is preserved under Möbius transformations \cite{CrossRatioNqubits,bengtsson_zyczkowski_2006}. Systems of four distinct points are related via Möbius transformations if their cross-ratios are related in the same way. The cross-ratio is not invariant under permutations of points, however, and depending on the ordering taken for the four points, it takes six values:
$\lambda, \frac{1}{\lambda}, 1-\lambda ,\frac{1}{1-\lambda},\frac{\lambda-1}{\lambda},\frac{\lambda}{\lambda-1}$
\cite{CrossRatioNqubits}. A particular interesting set of four points is one of the form $z,1/z, -z,-1/z $, which we call a \textit{normal system}.
Any set of four points may be mapped into a normal system, for which $z,1/z, -z,-1/z $ will be the solutions of the fourth degree equation $\lambda=4z^2/(1+z^2)^2$, where $\lambda$ is the corresponding cross-ratio from Eq.~(\ref{CRR}). Such a map is unique up to symmetries of the cube, i.e the group of $24$ rotations generated by $\pi/2$ rotations along the $X,Y,Z$ axis, denoted by $\mathcal{G}_{24}$. \cref{NorFor} contains the exact description of the group $\mathcal{G}_{24}$. In \cref{secNorForm} we formally show the following mathematical result.

\begin{proposition}
\label{24}
Each non-degenerated four points $z_1, z_2, z_3, z_4$ on the Bloch sphere can be transformed onto the normal form $z,\frac{1}{z},-z,-\frac{1}{z} $ via a Möbius transformation $T$.
The latter is uniquely defined up to $24$ rotations in the group $\mathcal{G}_{24}$.
\end{proposition}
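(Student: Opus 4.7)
The plan is to split the argument into existence of $T$ and uniqueness up to $\mathcal{G}_{24}$, with the uniqueness step reduced to a normaliser computation for a Klein four-subgroup of the Möbius group.

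\emph{Existence.} First compute the cross-ratio $\lambda = \lambda(z_1,z_2,z_3,z_4)$ and invert the relation $\lambda = 4w^2/(1+w^2)^2$ stated in the main text. This equation is quadratic in $w^2$, and its constant and leading coefficients both equal $\lambda$, so by Vieta the two roots $w_+^2, w_-^2$ satisfy $w_+^2 w_-^2 = 1$. The four solutions for $w$ therefore take the form $\{w_0, -w_0, 1/w_0, -1/w_0\}$ and are already in normal form by construction. Triple transitivity of the Möbius group then produces a unique transformation $T$ with $T(z_1)=w_0$, $T(z_2)=1/w_0$, $T(z_3)=-w_0$, and Möbius invariance of the cross-ratio fixes $T(z_4)=-1/w_0$.

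\emph{Reduction to a normaliser statement.} If $T'$ is another Möbius map sending $\{z_1,\ldots,z_4\}$ to a (possibly different) normal system $N'$, setting $S := T' \circ T^{-1}$ reduces the claim to showing that any Möbius map carrying one normal system $N$ onto another lies in $\mathcal{G}_{24}$. I would identify $N = \{w_0, 1/w_0, -w_0, -1/w_0\}$ as the orbit of $w_0$ under the Klein four-group $K$ generated by the $\pi$-rotations around the $X,Y,Z$ axes; in Möbius form these read $z\mapsto 1/z$, $z\mapsto -1/z$, $z\mapsto -z$ (as one reads off from the Pauli-matrix representation of $SU(2)$). Because any Möbius stabiliser of a four-point set acts by a cross-ratio-preserving permutation, and the only such permutations form a Klein four-group, the stabiliser of a generic $N$ inside the Möbius group equals exactly $K$. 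Hence $SKS^{-1} = \mathrm{Stab}(N') = K$, placing $S$ in the normaliser $\mathbf{N}(K)$.

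\emph{Normaliser computation (the main obstacle).} The remaining task is to verify $\mathbf{N}(K) = \mathcal{G}_{24}$. A short computation identifies the centraliser $\mathbf{C}(K)$ with $K$ itself: any Möbius map commuting with $z\mapsto -z$ must take the form $\alpha z$ or $\beta/z$, and commuting additionally with $z\mapsto 1/z$ forces $\alpha,\beta \in \{\pm 1\}$. The outer action then yields an embedding $\mathbf{N}(K)/\mathbf{C}(K) \hookrightarrow \mathrm{Aut}(K) \cong S_3$, so $|\mathbf{N}(K)| \leq 4 \cdot 6 = 24$. On the other hand, the inclusion $\mathcal{G}_{24} \subseteq \mathbf{N}(K)$ is immediate since the rotational cube symmetries permute the three coordinate axes and hence conjugate $K$ into itself. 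Equality follows, $S \in \mathcal{G}_{24}$, and uniqueness is established.
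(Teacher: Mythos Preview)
Your existence argument is essentially the paper's: both invert $\lambda=4w^{2}/(1+w^{2})^{2}$, note the four solutions already sit in normal form, and then invoke triple transitivity together with cross-ratio preservation to pin down $T$. The uniqueness argument, however, takes a genuinely different route. The paper proceeds by direct enumeration: the cross-ratio of $(z_{1},z_{2},z_{3},z_{4})$ assumes six values under reordering, and for each value the quartic $4w^{2}/(1+w^{2})^{2}=\lambda$ has four solutions $w_{0},1/w_{0},-w_{0},-1/w_{0}$, giving $6\times 4=24$ Möbius maps onto normal systems, which it then identifies with a $\mathcal{G}_{24}$-coset. Your approach is structural: you recognise a normal system as an orbit of the Klein four-group $K=\{1,\sigma_{x},\sigma_{y},\sigma_{z}\}$, reduce the ambiguity in $T$ to membership in the normaliser $\mathbf{N}(K)\subset\mathrm{PSL}(2,\mathbb{C})$, and pin down $\mathbf{N}(K)=\mathcal{G}_{24}$ via $\mathbf{C}(K)=K$ and $\mathrm{Aut}(K)\cong S_{3}$. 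The paper's count is more explicit and closer to how one would actually list the $24$ maps; your argument explains conceptually \emph{why} the ambiguity group is the rotational octahedral group (it is exactly the normaliser of the three coordinate $\pi$-rotations).

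One step to tighten: you deduce $SKS^{-1}=\mathrm{Stab}(N')=K$ from the statement that the stabiliser of a \emph{generic} normal system equals $K$. For harmonic or equianharmonic cross-ratios the stabiliser is strictly larger ($D_{4}$ or $A_{4}$), so that equality fails as written. The conclusion $SKS^{-1}=K$ nonetheless survives: $K$ is, for \emph{every} normal system $N'$, precisely the subgroup of $\mathrm{Stab}(N')$ acting by double transpositions (the three maps $z\mapsto -z$, $z\mapsto 1/z$, $z\mapsto -1/z$ realise exactly the three double transpositions on $\{w',-w',1/w',-1/w'\}$), and conjugation by $S$ carries double transpositions on $N$ to double transpositions on $N'$. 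With that refinement your normaliser argument goes through for all non-degenerate quadruples.
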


This mathematical result has interesting implications for the SLOCC verification and classification problems. Indeed, assume that $E$ is $\SLIP^4_n$ measure of degree $h=4$, and the state $\ket{\psi} \in \mathcal{H}_2^{\otimes (n+1)}$ has exactly four distinct roots for each subsystem. By \cref{24}, one can find a Möbius transformation $T_i$ that transforms roots for each system $i$ into the normal form. By \cref{T1}, the local operators $\mathcal{O}_i$ corresponding to transformations $T_i$ transform state $\ket{\psi}$ into the form with roots for $i$-th system being in normal form, hence
\begin{equation}
\ket{\psi'}:=\mathcal{O}_1\otimes\cdots\otimes\mathcal{O}_{n+1}\ket{\psi} 
\end{equation}
is a state for which roots of measure $E$ for each subsystem are in normal form. We call $\ket{\psi'}$ a $E$-\emph{normal form} of a state $\ket{\psi}$ with respect to measure $E$. Note that the $E$-normal form of a state, if it exists, is defined up to the group $\mathcal{G}_{24}^{n+1}$ of local rotations. 
We illustrate this procedure on a simple example of broadly discussed four-partite state $\frac{1}{\sqrt{2}}(\ket{0}\ket{\textrm{GHZ}}+\ket{1}\ket{\textrm{W}})$ \cite{Osterloh3,OsterlohWernerStates}. We calculate its roots for $\tau^{(3)}$ measure (\ref{3tangle}) and find the Möbius transformation transforming them into a normal system, see \cref{NorForEx} for detailed calculations. As a result, we obtain SLOCC operator that transforms the aforementioned state into its normal form with respect to  $\tau^{(3)}$ measure, see \cref{MobiusTrans}.

In this way, for the chosen $\SLIP^4_n$ measure $E$, we defined the $E$-normal form of any state that has exactly four roots with respect to measure $E$ for each system. As we have shown in \cref{easyBusyMainText}, choosing for example measure $E$ defined on \cref{(2k),(2k+1)}, provides $E$-normal form of the generic state of any number of qubits. 

This allows us to address more ambitious problems, when for example, one needs to verify the pairwise SLOCC-equivalence of a larger number of states. Indeed, one can reduce each of them to $E$-normal form and then compare them using only a finite (and relatively small) group $\mathcal{G}_{24}^{n+1}$ of local rotations. 
Furthermore, such $E$-normal form might be possibly used for an even more challenging task of SLOCC-classification of pure states. Indeed, as we will demonstrate in the next section, the $\tau^{(3)} $-normal form for measure (\ref{3tangle}) can be successfully used to classify generic four qubit states. Recall that the problem of classifying $n\leq 5$ states still remains open \cite{PolInv4qubits,Zhang_2016,KempfNessToEntanglement,BurchardtRaissi20}.

We shall finish this section by pointing out the intriguing connection between $E$-normal form of a state and, so-called, \textit{normal form} of a state \cite{NormalFormVerstraete}. Recall that a state is in a normal form if reduced density matrices to one subsystem are all maximally mixed, i.e. proportional to the identity. Normal form, if it exists, is defined up to the local unitary operations. The process of determining the normal form of a state, if it exists, may turn to be an infinite iterative minimization process \cite{NormalFormVerstraete}. On the other hand, the process of obtaining $E$-\emph{normal form} of a state is straightforward and consists of a finite number of steps. As we observed, for four qubit states the normal form of a state coincides with its $\tau^{(3)}$-normal form, see \cref{NorForEx}. For systems with a larger number of qubits $n>4$ (especially for $n=5.6$), we unsuccessfully searched for measures $E$ for which $E$-normal form would coincide with normal form of a state. Such measures, if found, would lead to a simple procedure for obtaining a normal form of a state for arbitrary number of qubits.


We illustrate this procedure by transforming the widely discussed four-partite state $\frac{1}{\sqrt{2}}(\ket{0}\ket{\textrm{GHZ}}+\ket{1}\ket{\textrm{W}})$ \cite{Osterloh3,OsterlohWernerStates} into its normal form, see \cref{MobiusTrans}. Without this technique, the standard way of obtaining the normal form would indeed result in an infinite iterative procedure.


\begin{figure}[h!]
\includegraphics[width=.99\columnwidth]{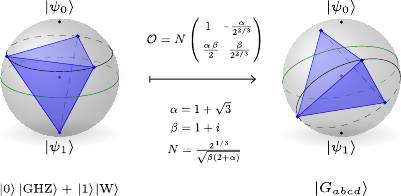}
\caption{The system of four roots (represented as blue dots) related to the 3-tangle polynomial measure $\tau^{(3)}$ evaluated on the first subsystem of the state $1/\sqrt{2}(\ket{0}\ket{\textrm{GHZ}}+\ket{1}\ket{\textrm{W}})$.
This system of four points can be mapped into a normal system (i.e. symmetrically related points $z,-z,1/z,-1/z$) by a Möbius transformation. Similar local transformations can be performed with respect to other subsystems, transforming the states into a state in the normal form.
}
\label{MobiusTrans}
\end{figure}

\section{State classification}
\label{sec:State classification}

We show that for small numbers of qubits $n=3,4$, our approach might be successfully used for the more demanding problem of  entanglement classification. 


Focusing first on the three-qubit case, genuinely entangled pure states are SLOCC-equivalent to either $\ket{\textrm{GHZ}}=\frac{1}{\sqrt{2}}(\ket{000}+\ket{111})$ or $\ket{\textrm{W}}=\frac{1}{\sqrt{3}}( \ket{001}+\ket{010}+\ket{100})$ \cite{threeQubits}. Using the 2-tangle $\tau^{(2)}$ \cite{PhysRevLett.80.2245} as the entanglement measure, one may use the roots to distinguish between the two classes. Indeed, all rank-2 reduced density matrices of the $\ket{\textrm{W}}$ state have a single root, while there are always two distinct roots for the $\ket{\textrm{GHZ}}$ state \cite{RegulaGeoTanglePRL}.

Contrary to the three qubit case, there are infinitely many SLOCC classes of four qubit states \cite{threeQubits}.
Although four qubit states were divided into nine families \cite{FourQubits,Djokovic4qubits,SpeeKraus}, we will focus on generic 4-qubit states, i.e. 4-qubit states with random coefficients belonging to the so called $G_{abcd}$ family - the 4-qubit SLOCC family with the most degrees of freedom. The representative state is of the form $\ket{G_{abcd}}=\frac{a+d}{2} \big(\ket{0000}+\ket{1111} \big)+\frac{a-d}{2} \big(\ket{0011}+\ket{1100} \big)+\frac{b+c}{2}  \big(\ket{0101}+\ket{1010} \big)+\frac{b-c}{2} \big(\ket{0110}+\ket{1001} \big)$,
where $a^2 \neq b^2 \neq c^2 \neq d^2 $ are pairwise different. Choosing the 3-tangle $\tau^{(3)}$ (\ref{3tangle}) as the entanglement measure, the states $\ket{G_{abcd}}$ have four non-degenerate roots already in the normal form, see \cref{GabcdProof}. Since the normal form of roots is unique up to the group $\mathcal{G}_{24}$, the problem of SLOCC-equivalence of states $\ket{G_{abcd}}$ becomes solvable, with a discrete amount of solutions. Indeed, it can quickly be confirmed if two states in the $G_{abcd}$ class are SLOCC equivalent by checking if one can be obtained from the other by the action of an element of the finite class of operators $\mathcal{O}\in \mathcal{G}_{24}^{\otimes 4}$. We thus find that exactly 192 states of the form $\ket{G_{a bcd}}$ are SLOCC-equivalent.

\begin{proposition}
Two states $\ket{G_{a bcd}}$ and $\ket{G_{a' b'c'd'}}$ are SLOCC-equivalent iff their coefficients are related by the following three operations: multiplication by a phase factor $(a', b',c',d')=e^{i\phi} (a ,b,c,d)$, and permutation of coefficients $(a', b',c',d')=\sigma (a ,b,c,d)$, and change of sign in front of two coefficients from $a,b,c,d$.
\label{propGabcd}
\end{proposition}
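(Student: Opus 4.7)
My plan is to specialize the finite discrimination procedure following \cref{T1} to the $G_{abcd}$ family. Taking $\tau^{(3)}$ as the SLIP measure, the key input from Supplementary Section 3 is that for every representative $\ket{G_{abcd}}$ and every traced-out qubit the four roots already form a normal system $\{z,1/z,-z,-1/z\}$ on the Bloch sphere. Since normal systems are unique up to the cube group $\mathcal{G}_{24}$, \cref{T1} forces any SLOCC transformation between two $G_{abcd}$ states to act on each qubit root system by a Möbius transformation in $\mathcal{G}_{24}$. The admissible local operators therefore lie in a finite subset of $\mathcal{G}_{24}^{\otimes 4}$ (equivalently, tensor products of four single-qubit Cliffords modulo scalar), reducing the problem to a finite orbit computation.

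For the sufficiency direction I would exhibit explicit operators realizing each family. Rewriting
\[
\ket{G_{abcd}}=a\ket{\Phi^{+}\Phi^{+}}+b\ket{\Psi^{+}\Psi^{+}}+c\ket{\Psi^{-}\Psi^{-}}+d\ket{\Phi^{-}\Phi^{-}},
\]
where the first Bell pair sits on qubits $1,2$ and the second on qubits $3,4$, reduces the check to short computations. A global phase is immediate; the three operators $(X\otimes X)_{12}$, $(Y\otimes Y)_{12}$, $(Z\otimes Z)_{12}$ (tensored with the identity on qubits $3,4$) each flip the signs of exactly two of the coefficients and together generate the full group of even-parity sign flips; and Clifford tensors such as $H^{\otimes 4}$ and $S^{\otimes 4}$ implement transpositions of Bell labels which together generate all of $S_{4}$ acting on $(a,b,c,d)$. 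Composing these yields the claimed orbit of size $192=|S_{4}|\cdot 8$.

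For the necessity direction I would carry out the converse enumeration: run through elements of $\mathcal{G}_{24}^{\otimes 4}$ and retain only those that send $\ket{G_{abcd}}$ back into the $G_{abcd}$ family. In the Bell--Bell basis this is a finite combinatorial question: the $G_{abcd}$ form is supported only on the four ``diagonal'' Bell--Bell products, so requiring that no off-diagonal component appear couples the Clifford choices on qubits $1,2$ tightly with those on qubits $3,4$. The resulting synchronization collapses the $|\mathcal{G}_{24}|^{4}$ candidates down to exactly $192$, matching the three operations of the proposition. The main obstacle is controlling this finite case check: the Bell--Bell reformulation is the decisive shortcut, turning the cube-group enumeration into a transparent combinatorial constraint that simultaneously explains why sign flips arise only in pairs and why a global phase is the only continuous freedom that survives.
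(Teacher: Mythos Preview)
Your overall strategy coincides with the paper's: both reduce the SLOCC search to the finite group $\mathcal{G}_{24}^{\otimes 4}$ via \cref{T1} and the normal-system property of the $\tau^{(3)}$ roots, and then determine the orbit explicitly. The paper works directly with rotation operators $\mathbf{R}_x(\pi/2)^{\otimes 4}$, $\mathbf{R}_y(\pi)\otimes\mathbf{1}\otimes\mathbf{R}_y(\pi)\otimes\mathbf{1}$, etc., and closes the necessity direction by a numerical enumeration over $\mathcal{G}_{24}^{\otimes 4}$; your Bell--Bell rewriting $\ket{G_{abcd}}=a\ket{\Phi^{+}\Phi^{+}}+b\ket{\Psi^{+}\Psi^{+}}+c\ket{\Psi^{-}\Psi^{-}}+d\ket{\Phi^{-}\Phi^{-}}$ is a genuinely cleaner device that makes the even-parity sign flips and the ``diagonal Bell--Bell support'' constraint transparent, and would let you replace the paper's numerical check by a short combinatorial argument.

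There is, however, one concrete slip in your sufficiency list. Operators of the form $C^{\otimes 4}$ with the \emph{same} Clifford on every qubit can never move the coefficient $c$: the singlet $\ket{\Psi^{-}}$ is invariant (up to a phase) under any $U\otimes U$, so $\ket{\Psi^{-}\Psi^{-}}$ is an eigenvector of every $C^{\otimes 4}$. Concretely, $(H\otimes H)\ket{\Psi^{-}}=-\ket{\Psi^{-}}$ and $(S\otimes S)\ket{\Psi^{-}}=i\ket{\Psi^{-}}$, so $H^{\otimes 4}$ and $S^{\otimes 4}$ generate only the $S_{3}$ on $\{a,b,d\}$ (together with sign flips), not all of $S_{4}$. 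To reach permutations touching $c$ you must use non-diagonal elements of $\mathcal{G}_{24}^{\otimes 4}$, exactly as the paper does with, e.g., $\mathbf{R}_y(\pi)\otimes\mathbf{1}\otimes\mathbf{R}_y(\pi)\otimes\mathbf{1}\mapsto (d,c,b,a)$ and $\mathbf{R}_x(\pi)\otimes\mathbf{1}\otimes\mathbf{R}_x(\pi)\otimes\mathbf{1}\mapsto (b,a,d,c)$. Once you add a couple of such operators your argument goes through; the Bell--Bell picture still helps, since these act by independent Bell-basis permutations on the $(1,2)$ and $(3,4)$ pairs.
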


\noindent
See \cref{GabcdProof} for a detailed proof of the above statement. We finish this section by pointing out some intriguing connections between this result and other related problems. Note that the symmetry in \cref{propGabcd} is given by the Weyl group of Cartan type $D_4$ and it has already been observed that the generators of four-qubit polynomial invariants exhibit this type of symmetry \cite{PolynomialInvariantsofFourQubits,Djokovic4qubits}. As a consequence, this result constitutes a new relation between 4-qubit invariants and the convex roof extension of 3-tangle $\tau^{(3)}$, which may shed some light on the problem of generalizing the CKW inequality \cite{DistributedEntanglement} for four qubit states \cite{GourWallach,ProblemTangle,Regula_2014,Eltschka_2014,Regula_2016a},
and beyond
\cite{ProblemTangle,GourWallach,MonogamyEqualitiesDeg2and4}.


\section{Conclusions}
\label{sec:Conclusions}

In this paper, we showed how a single $\SLIP_n^h$ entanglement measure is enough to verify if two generic $(n+1)$-qubit states to be SLOCC equivalent. Our result is applicable for any number of qubits. This was possible by showing that the roots of any $\SLIP_n^h$ measure transform via Möbius transformations under the SLOCC operations performed on the subsystems. In that way, SLOCC equivalence between two states is implied by the easily verifiable existence of a Möbius transformation relating aforementioned roots for each subsystem. Moreover, we define the $E$-normal form of the state with respect to the given $\SLIP_n^{h=4}$ measure $E$, for which roots are symmetrically distributed on the Bloch sphere. 
Such form is simple to determine and exists for generic states. In comparison, so-called, normal form of state requires possibly infinite iterative minimization procedure. 
Furthermore, we demonstrated our approach on 4-qubit states, and showed that the roots of the 3-tangle measure $\tau^{(3)}$ are enough to fully classify 4-qubit states from the most generic $G_{abcd}$ family. Lastly, as we observed that for 4-qubit states the $\tau^{(3)}$-normal form coincides with the normal form of a state, which gives a procedure to determine the normal form of state that circumvents the possibility of an infinite iterative process of the standard procedure.



\section*{Acknowledgements}

The authors thank Andreas Osterloh, Karol \.{Z}yczkowski, Rui Perdig\~{a}o and Yasser Omar for fruitful discussions and correspondence. The authors are especially grateful to Jens Siewert, Andreas Osterloh, Barbara Kraus, and Karol \.{Z}yczkowski for stimulating conversations which led to formal proof of Proposition 2 while staying at Centro de Ciencias de Benasque Pedro Pascual. A.B acknowledges support from the National Science Center under grant number DEC-2015/18/A ST2/00274 and by NWO Vidi grant (Project No. VI.Vidi.192.109). G.Q. thanks the support from Funda\c{c}\~{a}o para a Ci\^{e}ncia e a Tecnologia (Portugal), namely through projects CEECIND/02474/2018 and project UIDB/50008/2020 and IT project QuantSat-PT. R.A. acknowledges support from the Doctoral Programme in the Physics and Mathematics of Information (DP-PMI) and the Funda\c c\~ao para a Ci\^encia e Tecnologia (FCT)  through Grant No.~PD/BD/135011/2017. 
This work is funded/co-funded by the European Union (ERC, ASC-Q,
101040624). Views and opinions expressed are however those of the authors
only and do not necessarily reflect those of the European Union or the
European Research Council. Neither the European Union nor the granting
authority can be held responsible for them.





\section*{Appendices}
In appendices, we present detailed proofs of the statements presented in the main body of the paper, as well as a summary of the related results concerning Möbius transformations, SLIP entanglement measures, Haar measure on the set of pure quantum states, and SLOCC-classification. 

\cref{TheoremProof,22november} contain proofs of Theorem 1 and Proposition 1 respectively. 

\cref{22Jun} summarizes basic results regarding the Haar measure on the set of pure quantum states and discusses the notion of generic quantum states. \cref{Section 5} proves the main result of a paper, namely \cref{easyBusyMainText}. The proof is based on two technical lemmas related to the systems of odd and even number of qubits, see \cref{lemma1odd} and \cref{lemma1even} respectively.

\cref{11november} presents a well-known form of a unique Möbius transformation transforming two given tuples of 3 points one onto another. \cref{NorFor,secNorForm} discuss details regarding an $E$-normal form of a state with respect to a given measure $E$. Furthermore, \cref{NorForEx} provides an explicit example of transformation of a state into its $E$-normal form. 

\cref{GabcdProof} presents proof of \cref{propGabcd}, which is reproducing the SLOCC-classification results for the generic systems of four qubits. The $E$-normal form proves to be useful for that purpose. 

Lastly, in \cref{24September}, we apply the presented method to verify the SLOCC equivalence between certain $5$-qubit states.

\appendix

\section{Proof of \cref{T1}}
\label{TheoremProof}

We present a proof for Theorem 1. Any $n+1$ partite qubit state $\ket{\psi} \in \mathcal{H}_2^{\otimes (n+1)}$ might be written as
\begin{equation}
\label{Peq111}
\ket{\psi}=\ket{0} \ket{\psi_0} + \ket{1}\ket{\psi_1}.
\end{equation}
Such a form provides the canonical decomposition of the reduced density matrix $\rho_{} =\ket{\psi_0}\bra{\psi_0} +\ket{\psi_1}\bra{\psi_1}$ over the non-normalized states $\ket{\psi_0}, \ket{\psi_1}\in \mathcal{H}_2^{\otimes N}$, obtained by tracing out the first qubit. Consider now a reversible operator
$\mathcal{O} =\begin{psmallmatrix}
a&b\\
c&d
\end{psmallmatrix} \in \SLC$ acting on the first qubit. Under the action of this operator, the state $\ket{\psi}$ is transformed into
\begin{align*}
\ket{\psi'}&:=
\mathcal{O} \ket{\psi}= \Big(a\ket{0}+b\ket{1}\Big)\ket{\psi_0} \\&+ \Big(c\ket{0}+d\ket{1}\Big) \ket{\psi_1}
= \ket{0}\ket{\psi'_0} + \ket{1}\ket{\psi'_1}\,,
\end{align*}
where
\ea{
\ket{\psi'_0} & := a\ket{\psi_0}+c\ket{\psi_1} \label{p0l}\,, \\
\ket{\psi'_1} & := b\ket{\psi_0}+d\ket{\psi_1} \label{p1l}\,.
}
Consider now any superposition of states $\ket{\psi'_0}$ and $\ket{\psi'_1}$. Observe that
\begin{align*}
\ket{\psi'_z} := z \ket{\psi'_0}+\ket{\psi'_1}
&=z \Big(a\ket{\psi_0} +c \ket{\psi_1} \Big)  \\
&+  b\ket{\psi_0} +d \ket{\psi_1} \\
&=\left(a z+b\right) \ket{\psi_0} +\left( cz+d \right) \ket{\psi_1}\\
&\propto \dfrac{az+b}{cz+d}\ket{\psi_0 }+ \ket{\psi_1},
\end{align*}
where the compex number $cz+d$ was factored out in order for the transformation to map states from the extended plane representation to the extended plane representation.  In other words, we have
\eq{
\ma{O}\ket{\psi_z} = \ket{\psi_{z'}}\,, \quad z' = \frac{az+b}{cz+d}\,,
}
i.e. the operator $\ma{O}$ transforms states in the extended plane representation by applying a Möbius transformation on the index $z$. Suppose now that $\zeta_i$ is a zero of a $h$-degree polynomial function $E$, i.e. $E (\zeta_i \ket{\psi_0}+ \ket{\psi_1})=0$. Acting on the first qubit with $\ma{O}$, the density matrix after tracing out the first qubit becomes $\ket{\psi'_0}\bra{\psi'_0} +\ket{\psi'_1}\bra{\psi'_1}$, so the entanglement measure $E$ will be zero for some new roots $\zeta'_i$, such that $E (\zeta'_i \ket{\psi'_0}+ \ket{\psi'_1})=0$. Using Eqs.~(\ref{p0l})-(\ref{p1l}), the later equation can be transformed into
\eq{
E\left((c\zeta_i'+d)\left(\dfrac{a\zeta_i'+b}{c\zeta_i'+d} \ket{\psi_0}+ \ket{\psi_1}\right) \right)=0
}
where the factor $(c\zeta_i'+d)$ is irrelevant since any root multiplied by it will still be a valid root. Comparing with the equation for the roots before the action of $\ma{O}$, we reach the conclusion that the roots transform according to the inverse Möbius transformation as
\eq{\label{Mzeta}
\zeta'_i =\dfrac{d\zeta_i-b}{-c\zeta_i+a} \,,
}
under the action of the operator $\ma{O}$. As a consequence, the roots of the zero-polytope transform with respect to the inverse Möbius transformation associated to the operator $\mathcal{O} =\begin{psmallmatrix}
a&b\\
c&d
\end{psmallmatrix}$.
Analize now the case when $\mathcal{O}$ is a unitary operator $\mathcal{O}=\mathcal{U}$.
Since any unitary operator $\mathcal{U}$ can be represented as a rotation $\mathcal{R} =\begin{psmallmatrix}
\text{cos}\alpha & \text{sin}\alpha \;e^{i \phi}\\
-\text{sin}\alpha \;e^{-i \phi} & \text{cos}\alpha \end{psmallmatrix}$
(up to an irrelevant global phase), it will simply rotate the Bloch ball, together with the zero-polytope.

Consider now multilocal operators $\mathcal{O}_{\vec{n}}= \mathcal{O}_1\otimes\ldots\otimes\mathcal{O}_n$ acting on the remaining qubits of the state $\ket{\psi}$ from \cref{Peq111}.
The state $\ket{\psi}$ will transform accordingly as
\eq{
\ket{\psi'}:=
\mathcal{O}_{\vec{n}}  \ket{\psi}=\ket{0}
\underbrace{\mathcal{O}_{\vec{n}} \ket{\psi_0} }_{:=\ket{\psi_0'}}+
\ket{1}
\underbrace{\mathcal{O}_{\vec{n}} \ket{\psi_1} }_{:=\ket{\psi_1'}}.
}
After the action of $\mathcal{O}_{\vec{n}}$, a state in the extended plane representation will have a value of entanglement measure $E$ equal to
\[
E \Big( z\ket{\psi_0'}+ \ket{\psi_1'} \Big)=
E \Big( \mathcal{O}_{\vec{n}} \big(z\ket{\psi_0}+ \ket{\psi_1}\big) \Big)\,.
\]
However, since $E$ is $\SLC^{\otimes n}$ invariant, we have that $E (z \ket{\psi_0}+ \ket{\psi_1} )=0$ iff $E (z \ket{\psi_0'}+ \ket{\psi_1'} )=0$, and so the roots of both polynomial equations are the same. As a consequence, the roots of the the zero-polytope will remain unchanged under the action of $\mathcal{O}_{\vec{n}}$. This concludes the proof of Theorem~1.

\section{Proof of \cref{proppppp1}}
\label{22november}

In order to prove \cref{proppppp1}, we begin with the following simple observation:

\begin{observation}
For two sets $\Lambda =\{z_1,\ldots,z_h\}$ and $\Lambda' =\{w_1,\ldots,w_h\}$ of $h$ distinct  elements of the extended complex plane $z_i,w_i \in \hat{\mathbb{C}}$, $h>2$, there exists at most $h(h-1)(h-2) =3! {{h}\choose{3}}$ Möbius transformations which maps $\Lambda$ into $\Lambda'$.
\end{observation}

\begin{proof}
Suppose that $T$ is a Möbius transformation which transforms $\Lambda$ into $\Lambda'$. Thus,
\[
T(z_1)=w_{i_1} ,\quad T(z_2)=w_{i_2}, \quad T(z_3)=w_{i_3},
\]
for distinct indices $i_1,i_2,i_3 =1,\ldots,h$. Observe that there are exactly $h(h-1)(h-2)$ possibilities for choice of those indices. For each such possibility, there is an unique Möbius transformation, which maps
$z_1 \mapsto w_{i_1}, z_2 \mapsto w_{i_2} ,z_3 \mapsto w_{i_3} $, see \cref{Section 5} \cite{TothMTbook}. Therefore there are at most $h(h-1)(h-2)$ Möbius transformations which maps $\Lambda$ into $\Lambda'$.
\end{proof}

\begin{remark}
\label{1Dec}
Usually, there are more efficient methods to determine all Möbius transformations which maps a given set $\Lambda$ into the other $\Lambda'$. For example, if $|\Lambda|=|\Lambda'|=4$, one may calculate the values of cross-ratios. If they differ for any choice of orderings in $\Lambda$ and $\Lambda'$, those sets are not related by any Möbius transformation.
\end{remark}

\begin{proof}[Proof of Proposition 1]
Suppose that two $(n+1)$-qubit states $\ket{\psi},\ket{\phi} \in \mathcal{H}_2^{\otimes (n+1)}$ have at least $3$ roots calculated with respect to a given $\SLIP_n^h$ measure $E$ for each subsystem. We shall show that both states are SLOCC-equivalent iff in the outcome of Procedure 1 there is an operator $\mathcal{O}$ providing such equivalence.

Firstly, notice that if in the outcome of Procedure 1 there is an operator providing SLOCC-equivalence, both states are SLOCC-equivalent. Hence, described condition is a sufficient condition for SLOCC-equivalence.

Secondly, we shall see that the described condition is a necessary condition for SLOCC-equivalence. Indeed, suppose that states $\ket{\psi},\ket{\phi}$ are SLOCC-equivalent, i.e. there exists an operator
\begin{equation}
\label{Dec1prim}
\mathcal{O} =\mathcal{O}_1 \otimes \cdots\otimes \mathcal{O}_{n+1}
\end{equation}
such that $\mathcal{O}\ket{\psi} \propto \ket{\phi}$ for some operators $\mathcal{O}_i \in  \SLC$. Denote by $T_i$ the Möbius transformation corresponding to $\mathcal{O}_i$. Denote by $\Lambda_i =\{z_1^i,\ldots,z_{h_i}^i\}$ the system of roots of the state $\ket{\psi}$ calculated with respect to $i$-th subsystem. According to Theorem 1, the $i$-th system of roots of a state $\ket{\phi}\propto  \mathcal{O}\ket{\psi} $ is simply  $\Lambda_i ' =\{T_i(z_1^i),\ldots,T_i(z_{h_i}^i)\}$. In particular $T_i$ maps $\Lambda_i$ into $\Lambda_i'$, hence will be found in Step \textbf{2} of Procedure 1. Note, that there might be other operators transforming $\Lambda_i$ into $\Lambda_i'$, hence in Step \textbf{2}, one may obtain multiple local operators. Similarly for any other subsystem, the transformation $T_i$ (and the corresponding local operator $\mathcal{O}_i$) will be found. Hence among all local operators in the outcome of Procedure 1, there will be operator $\mathcal{O}$ from \cref{Dec1prim}.
\end{proof}


\section{Fubini–Study measure and generic states}
\label{22Jun}

A \textit{Fubini–Study measure} of the set of pure quantum states (known also as a Haar measure) is a unique probability distribution on $\mathcal{H}_{d+1}$ which is invariant under any unitary operation acting on the whole $\mathcal{H}_{d+1}$. It can be explicitly written as a probability density distribution 
\begin{equation}
\label{RanVar}
\Omega \Big( \nu_1,\ldots,\nu_d ,\theta_1,\ldots,\theta_d\Big)=
\dfrac{d!}{\pi^d}\prod_{i=1}^d \text{cos} \;\theta_i \;\text{sin} \;\theta_i^{2i-1}
\end{equation}
where $
\theta_i\in [0,\frac{\pi}{2} ],\,\nu_i\in [0,2 \pi ]$, 
on the set of pure quantum states parametrized as 
\begin{equation}
\label{notimp}
\ket{\psi}= \sum_{i=0}^d \Bigg( e^{i \nu_i} \;
\text{cos} \;\theta_i \prod_{\ell =i+1}^d \text{sin} \;\theta_\ell \Bigg)
\ket{i} \in \mathcal{H}_{d+1} ,
\end{equation}
with the convention $\theta_0=\nu_0=0$ \cite{bengtsson_zyczkowski_2006}. Notice that states in \cref{notimp} are normalized and 
\begin{align*}
\int_0^{2 \pi}\cdots \int_0^{2 \pi} \int_0^{\frac{\pi}{2}} \cdots \int_0^{\frac{\pi}{2}}&
\Omega \Big( \nu_1,\ldots,\nu_d ,\theta_1,\ldots,\theta_d\Big)
\text{d}\nu_1\cdot \\
&\cdot\text{d}\nu_2 \cdots \text{d}\nu_d
\text{d}\theta_1 \cdots \text{d}\theta_d
=1
\end{align*}
hence $\Omega$ it is, indeed, a probability distribution. Notice that all random variables $\theta_i,\nu_i$ in \cref{RanVar} are chosen independently according to the uniform distribution $\Theta (\nu_i )=\frac{1}{2 \pi} \nu_i$ on $[0,2 \pi]$ for $\nu_i$ variables, and according to 
\begin{equation}
\label{IndChoss}
\Omega_i \Big( \theta_i \Big)=
2\, i \, \text{cos} \;\theta_i \;\text{sin} \;\theta_i^{2i-1}
\end{equation}
distribution on $[0,\frac{\pi}{2} ]$. 

Since $\mathcal{H}_2^{\otimes n} \cong \mathcal{H}_{2^n}$, the Fubini–Study measure is also well defined on the set of pure quantum states of $n$-qubits $\mathcal{H}_2^{\otimes n} $. We shall notice some properties of such a distribution, while expressing the state in the form $\ket{\psi}=\ket{0}\ket{\psi_0} +\ket{1}\ket{\psi_1}$. For simplicity, we introduce the index sets $\mathcal{J}=\{0,1\}^n$, $\mathcal{J}_0=\mathcal{J} \setminus \{0,\ldots,0\}$, and for any index $I=(i_1,\ldots,i_n )\in \mathcal{J}$, we denote by $|I|$ the decimal representation of the binomial string $I$, i.e. $|I|=\sum_{j=1}^{n} 2^{j-1} i_j$. In order to determine the Fubini–Study measure on $\mathcal{H}_2^{\otimes n+1}$, we shall use the following isomorphism between $\mathcal{H}_2^{\otimes n+1} \cong \mathcal{H}_{2^{n+1}}$ defined on the basis vectors as $\ket{i I}\mapsto \ket{\bar{i} 2^n +|I|}$ for $I\in \mathcal{J}$, and where $\bar{i}=i+1 (\text{mod }2)$. In such a way, the Fubini–Study measure on $\mathcal{H}_2^{\otimes n+1}$ is given by the probability density distribution independent for each random variable and written jointly as
\begin{align}
\label{RanVar2}
\Omega \Big( \nu_I,\theta_I, \nu_{I'}',\theta_{I'}' : I\in \mathcal{J},I'\in \mathcal{J_0} \Big)
=
\dfrac{(2^{n+1}-1)!}{\pi^{2^{n+1}-1}}\cdot \\
\nonumber
\prod_{I\in \mathcal{J}} \text{cos} \;\theta_I \;\text{sin} \;\theta_I^{2^{n+1}+2|I|-1}
\,
\prod_{I'\in \mathcal{J_{0}}} \text{cos} \;\theta_{I'}' \;\text{sin} \;(\theta_{I'}')^{2|I'|-1}
\end{align}
where $\theta_I,\theta_I'\in [0,\frac{\pi}{2} ]$, and $
\nu_I,\nu_I'\in [0,2 \pi ],$ on the set of pure quantum states parametrized as 
\begin{equation}
\label{EqStar}
\ket{\psi}=
\ket{0}\ket{\psi_0} 
+
\big( \prod_{I\in \mathcal{J}}^{} \;\text{sin} \,\theta_I \; \big)
\ket{1}\ket{\psi_1'}
\end{equation}
where
\begin{align}
\label{notimp2}
\ket{\psi_0}= 
&\sum_{I\in \mathcal{J}} d_I \ket{I}, 
&d_I=  e^{i \nu_I} \;
\text{cos} \;\theta_I \prod_{\substack{J\in \mathcal{J}: \\ |J|>|I|}} \text{sin} \,\theta_J ,
\\
\nonumber
\ket{\psi_1'}= 
&\sum_{I\in \mathcal{J}} c_I \ket{I}, 
&c_I=  e^{i \nu_I'} \;
\text{cos} \;\theta_I' \prod_{\substack{J\in \mathcal{J}: \\ |J|>|I|}} \text{sin} \,\theta_J' ,
\end{align}
with the convention $\theta_{0\cdots 0}'=\nu_{0\cdots 0}'=0$. Notice that the state coefficients of $\ket{\psi_0}$ depends only on $\theta_I,\nu_I$ variables, while state coefficients of $\ket{\psi_1'}$ depends only on $\theta_I',\nu_I'$ variables. Furthermore, state coefficient $d_I$ of $\ket{\psi_0}$ depends only on the variables $\nu_I$ and $\theta_J$ for $J\in \mathcal{J}$ such that $|J|\geq |I|$. In particular, $d_{0\cdots 0} $ is the only coefficient in $\ket{\psi_0}$ depending on $\theta_{0\cdots 0}$, while $d_{0\cdots 0},d_{0\cdots 01} $ are the only coefficients in $\ket{\psi_0}$ depending on $\theta_{0\cdots 01}$, etc.

By a \textit{generic} pure quantum state $\ket{\psi}\in \mathcal{H}_2^{\otimes n}$ we understand a random state chosen with respect to the Fubini–Study measure on the set of quantum states \cite{bengtsson_zyczkowski_2006}. In that sense, we say that some property occurs with probability zero for generic state iff it occurs only on the set of measure zero among all pure quantum states with respect to the Fubini–Study measure. Notice that such a notion of generic states \cite{HLW06,GenericAandL,PhysRevA.93.062112} is not the only one which appear in literature. Some authors refers to generic states as those whose stabilizers meet certain symmetry conditions \cite{Gour_2017,PhysRevLett.118.040503,PhysRevA.105.032458}. It is worth mentioning that those two notions of generic states among pure quantum states agrees up to the measure zero subset on the set of pure quantum states with respect to the Fubini–Study measure.

\section{Proof of \cref{easyBusyMainText}}
\label{Section 5}
In this section, we shall prove that a generic pure quantum state will have exactly four distinct roots for each subsystem, for measures defined in \cref{(2k+1),(2k)}. We begin with the following lemma.

\begin{lemma}[Even case]
\label{lemma1odd}
Consider a generic pure quantum state $\ket{\psi}\in \mathcal{H}^{\otimes n+1}_2$ of even number of qubits written in the form $\ket{\psi}= \ket{0}\ket{\psi_0}+\ket{1}\ket{\psi_1}$  where $\ket{\psi_0},\ket{\psi_1}\in \mathcal{H}^{\otimes n}_2$, and a SLIP measure $ E^{(n)}$ presented in \cref{(2k+1)}. The conditional probability that the polynomial equation $ E^{(n)} \big(\ket{\psi_0}+z\ket{\psi_1}\big)=0$ has no multiple root at $z=0$ under the condition that it has at least a single root at $z=0$ is zero.
\end{lemma}

\begin{proof}[Proof of \cref{lemma1odd}]
Consider an $(n+1)$-qubit state $\ket{\psi}\in \mathcal{H}^{\otimes n+1}_2$ written in the form $\ket{\psi}= \ket{0}\ket{\psi_0}+\ket{1}\ket{\psi_1}$, and suppose that $n=2k+1$ is an odd number. We shall evaluate the entanglement measure presented in \cref{(2k+1)} on the following state $\ket{\psi_0}+z\ket{\psi_1}$, i.e.
\begin{align*}
E^{(2k+1)}& \big(\ket{\psi_0}+z\ket{\psi_1}\big) =\\
&=\Bigg\vert 
\sum_{j=\Id,x,y,z} \eta_j \;
\Big(
\bra{\psi_0}\sigma_j \otimes \sigma_y \otimes\cdots\otimes \sigma_y
\ket{\bar{\psi_0}} +\\
&+2 z  \bra{\psi_1}\sigma_j \otimes \sigma_y \otimes\cdots\otimes \sigma_y
\ket{\bar{\psi_0}} 
\\
&+z^2 \bra{\psi_1}\sigma_j \otimes \underbrace{\sigma_y \otimes\cdots\otimes \sigma_y}_{2k} 
\ket{\bar{\psi_1}} 
\Big)^2
\Bigg\vert .
\nonumber
\end{align*}
Furthermore, we expand the equation above as an absolute value of a degree four polynomial in $z$ variable, namely
\begin{equation*}
\label{polypoly}
E^{(2k+1)} \big(\ket{\psi_0}+z\ket{\psi_1}\big) =
\big\vert 
C_0 +C_1 z+ C_2 z^2 +C_3 z^3+ C_4 z^4
\big\vert 
\end{equation*} 
where the coefficients $C_i$ are of the following form:
\begin{align*}
C_0
=& 
\sum_{j=\Id,x,y,z} \eta_j \;
\bra{\psi_0}\sigma_j \otimes \sigma_y \otimes\cdots\otimes \sigma_y
\ket{\bar{\psi_0}} 
\cdot
\\&\cdot
\bra{\psi_0}\sigma_j \otimes \sigma_y \otimes\cdots\otimes \sigma_y
\ket{\bar{\psi_0}} 
=
E^{(2k+1)} \big(\ket{\psi_0}\big),
\\
C_1
=& 2
\sum_{j=\Id,x,y,z} \eta_j \;
\bra{\psi_1}\sigma_j \otimes \sigma_y \otimes\cdots\otimes \sigma_y
\ket{\bar{\psi_0}} 
\cdot
\\&\cdot
\bra{\psi_0}\sigma_j \otimes \sigma_y \otimes\cdots\otimes \sigma_y
\ket{\bar{\psi_0}} ,
\\
&\ldots 
\end{align*}
Notice that the equation $ E^{(n)} \big(\ket{\psi_0}+z\ket{\psi_1}\big)=0$ has root at $z=0$ iff the coefficient $C_0 =0$, and has a multiple root at $z=0$ iff coefficients $C_0 ,C_1=0$. Therefore we shall prove that the conditional probability
\begin{equation}
\label{conditional1}
\mathbb{P} \Big(C_1=0 \,|\, C_0=0 \Big) =0 
\end{equation}
that $C_1=0$ under the condition $C_0=0$ is zero. 

We can rewrite the coefficients $C_0 ,C_1$ in the following way
\begin{align*}
C_0 & = \braket{\psi_0 | A(\psi_0)}\,, \nonumber \\
C_1 & = \braket{\psi_1 | A(\psi_0)}\,,
\end{align*}
where $\ket{A(\psi_0)}$ is the following vector
\begin{equation*}
\ket{A(\psi_0)} = A(\psi_0) \ket{\overline{\psi}_0}\,,
\end{equation*}
defined by operator 
\begin{align*}
A(\psi_0) = 
\sum_{j=\Id,x,y,z} &\left(\eta_j \braket{\psi_0|\si_j \otimes \si_y \cdots \otimes \si_y |\overline{\psi}_0}\right) \cdot\\
&\cdot\si_j \otimes \si_y \cdots \otimes \si_y\,.
\end{align*}
Notice that $\ket{A(\psi_0)} $ is a function of state coefficients of $\ket{\psi_0}$ only. State coefficients of $\ket{\psi_1}$ are not related to the state coefficients of $\ket{\psi_0}$ in anyway by a norm, see \cref{EqStar,notimp2}, hence the space of solutions for the equation
\begin{equation}
C_1 = \braket{\psi_1 | A(\psi_0)} = 0
\end{equation}
is of zero measure iff $\ket{A(\psi_0)} \not\equiv 0$. Therefore, condition \cref{conditional1} can be rewritten as the conditional probability
\begin{equation}
\label{conditional2}
\mathbb{P} \Big(\ket{A(\psi_0)} \equiv 0 \,|\, 
\braket{\psi_0 | A(\psi_0)} =0 \Big) =0 
\end{equation}
that $\ket{A(\psi_0)} \equiv 0$ under the condition $\braket{\psi_0 | A(\psi_0)} =0$ is zero. 

For simplicity, we shall introduce the following notation:
\begin{equation*}
\be_k = \braket{\psi_0|\si_k \otimes \si_y \cdots \otimes \si_y |\overline{\psi}_0}\,,
\end{equation*}
for $i=0,2,3$. It is easy to see that $\braket{\psi_0 | A(\psi_0)} = -\be^2_0+\be^2_2+\be^2_3$, hence the condition $\braket{\psi_1 | A(\psi_0)} =0$ can be rewritten as 
\begin{equation}
\label{cond1}
\sum_k \eta_k \be^2_k = -\be^2_0+\be^2_2+\be^2_3 = 0\,,
\end{equation}
Furthermore, the operator $A(\psi_0)$ takes the following form
\begin{equation*}
A(\psi_0) = \left(\sum_{\mu} \eta_{\mu} \be_{\mu} \si_{\mu}\right) \otimes \si_y \otimes \cdots \otimes \si_y\,.
\end{equation*}
Observe that condition $\ket{A(\psi_0)}=A(\psi_0) \ket{\overline{\psi}_0}\equiv 0$ is equivalent to requiring that $\ket{\overline{\psi}_0}$ belongs to the null space $\ma{N}(A(\psi_0))$ of the matrix $A(\psi_0)$. Therefore, we have an immediate implication: if the null space of the operator $A(\psi_0)$ is trivial, i.e. $\ma{N}(A(\psi_0))=\{0\}$ then $\ket{A(\psi_0)}\not\equiv 0$. Since the null space of $\si_y$ is empty, so is the null space of any number of tensor products of $\si_y$. Hence, we shall find a null space $\ma{N}\left(\sum_{\mu} \eta_{\mu} \be_{\mu} \si_{\mu}\right)$. One simple way to do this is to calculate the eigenvectors of $\sum_{\mu} \eta_{\mu} \be_{\mu} \si_{\mu}$ and look at the space spanned by the eigenvectors whose eigenvalues are 0. It is straightforward to show that the eigenvalues and eigenvectors of a general combination of Pauli matrices $\sum_{\mu} a_{\mu} \si_{\mu}$ are given by
\begin{equation*}
\label{eigenSystem}
v_{\pm} = \left\{
\begin{pmatrix}
\frac{a_3 \pm \sqrt{a^2_1 + a^2_2 + a^2_3}}{a_1 + i a_2} \\
1
\end{pmatrix}
\right\}\,, \\
\lambda_{\pm} = a_0 \pm \sqrt{a^2_1 + a^2_2 + a^2_3}\,.
\end{equation*}
Hence, for $a_{\mu} = \eta_{\mu} \be_{\mu}$ under the condition $C_0=0$ (in form \cref{cond1}), we have the eigensystem
\begin{equation*}
\label{eigenSystemD}
v_{\pm} = \left\{
\begin{pmatrix}
-i \frac{\be_3 \pm \sqrt{\be^2_0}}{\be_2} \\
1
\end{pmatrix}
\right\}\,, \\
\lambda_{\pm} = \be_0 \pm \sqrt{\be^2_0}\,.
\end{equation*}
Therefore if $\be_0 \neq 0$ then $\ket{A(\psi_0)} \not\equiv 0$. Recall that the condition $C_0 = 0$ is equivalent to $\be^2_0 = \be^2_2 + \be^2_3$, hence the following conditional probability
\begin{equation}
\label{conditional3}
\mathbb{P} \Big(\be_0 = 0 
\,|\, 
\be^2_0 = \be^2_2 + \be^2_3 \Big) =0 
\end{equation}
that $\be_0 = 0$ under the condition $\be^2_0 = \be^2_2 + \be^2_3$ is zero implies \cref{conditional2}. In the remaining part of the proof, we shall show that \cref{conditional3} holds true.

We begin by expanding the $n$-qubit state $\ket{\psi_0}$ in the computational basis 
\begin{equation*}
\ket{\psi_0} = \sum_{j_1,\ldots,j_{n}=0}^n d_{j_1  \, \cdots \,  j_{n}} \ket{j_1 ,\, \ldots \, , j_{n}}\,.
\end{equation*}
Notice that $\be_k$ takes the following form in the computational basis
\begin{align*}
\be_k = \sum_{\substack{j_1,\ldots,j_{n} \\i_1,\ldots,i_{n} }} &d^{}_{j_1 \, \cdots \,  j_{n}} d_{i_1  \, \cdots \,  i_{n}} \braket{j_1|\si_k|i_1}\cdot\\
&\cdot \braket{j_2, \, \ldots \, , j_{n}|\si_y \otimes\cdots \otimes \si_y |i_2 ,\,  \ldots \, , i_n}\,.
\end{align*}
Notice that because of the cancelation $\be_2=0$, we can further simplify the expression for $\be_k$ and single out one state coefficient, namely $d_{0 \,\cdots \,0}$. We have
\begin{align}
\nonumber
\be_0 & = 2 (-1)^k\, (  D_0 \,+\, D_3),
\\
\label{besimple}
\be_2 & = 0 ,
\\
\nonumber
\be_3 & = 2(-1)^k \, ( D_0 \,-\, D_3) ,
\end{align}
where
\begin{align}
\nonumber
D_0 =&
\sum_{I \in \mathcal{I} } (-1)^{|I|} d_{0\,I}d_{0 \bar{I}}\,,
&
\\
\label{DDADD}
D_1 =&
\sum_{I \in \mathcal{I}} (-1)^{|I|} d_{1\,I}d_{1 \bar{I}}\,,
\end{align}
where $\mathcal{I}=\{0,1\}^{n-1}$ is the index set introduced for consistency of the notation, and for any bit-string $I=(i_1,\ldots,i_{n-1})$, we define its complement $\bar{I}=$ as $\bar{I}:= (\bar{i_1},\ldots,\bar{i_{n-1}})$ where $\bar{i_j} = i_j+1 \,(\text{mod}\, 2)$, and $|I|=\sum_{j=1}^{n-1} i_j$. 
Expanding $\be_k$ according to \cref{besimple}, the condition $\be^2_0 = \be^2_2 + \be^2_3$ becomes the following
\begin{equation}
\big( D_0\,+ \,D_1 \big)^2 
\,=\,
\big( D_0\,- \,D_1 \big)^2
\end{equation} 
and hence has exactly two solutions, either $D_0 =0$ or $D_1=0$. Notice that under any of those two solutions the condition $\beta_0=0$, i.e $D_0+D_1=0$ becomes $D_0=D_1=0$. Hence  \cref{conditional3} can be rewritten as the following conditional probability
\begin{align}
\label{conditional4a}
\mathbb{P} \Big( D_0=0\,\wedge\, D_1=0 
\,\Big|\, 
D_0=0\,\vee \, D_1=0 
\Big) =0 .
\end{align}
Notice that
\begin{align}
\label{conditional5}
\mathbb{P} &\Big( D_0=0\,\wedge\, D_1=0 
\,\Big|\, 
D_0=0\,\vee \, D_1=0 
\Big) \leq \\
\nonumber
&
\mathbb{P} \Big( D_0=0
\,\Big|\, 
 D_1=0 
\Big)
+
\mathbb{P} \Big( D_1=0
\,\Big|\, 
 D_0=0 
\Big)
,
\end{align}
hence we shall show that both terms on the right-hand side of the equation above vanish. 

Note that equation $D_0=0$ imposes conditions on the state coefficients $d_{0 I}$, $I\in \mathcal{I}$ only, while equation $D_1=0$ on the state coefficients $d_{1 I}$, $I\in \mathcal{I}$, see \cref{DDADD}. Since equations $D_1=0$ and $ D_0=0$ impose conditions on different state coefficients, intuitively condition $ D_0=0$ should not enforce $ D_1=0$ being satisfied (and vice-versa). This can be rigorously showed by the properties of Fubini–Study measure. Recall that state coefficients $d_{1 I} $ in $\ket{\psi_0}$ depend on $\theta_{1 I}$, and $ \nu_{1 I}$ where $I\in \{0,1\}^{n-1}$ only, see \cref{EqStar,notimp2}. Therefore, the condition $ D_1=0$ can be rewritten in $\theta_{1 I}$ and $ \nu_{1 I}$ variables as
\begin{equation}
\label{evFryA}
f(\theta_{1I},\nu_{1I} : I\in \mathcal{I} ) =0,
\end{equation}
where $f$ is an elementary function in $\theta_{1I},\nu_{1I} : I\in \mathcal{I} $ variables. The exact form of $f$ can be traced back from \cref{DDADD} and \cref{notimp2}. In particular $f$, as an elementary function, is continuous on its domain. Similarly, the condition $ D_0=0$ can be rewritten in $\theta_{0 I}$ and $ \nu_{0 I}$ variables as
\begin{equation}
\label{evFryB}
g(\theta_{0I},\nu_{0I} : I\in \mathcal{I} ) =0,
\end{equation}
where $g$ is an elementary, hence continuous, function. Therefore, $\mathbb{P} ( D_1=0 \,|\, D_0=0 )$ can be rewritten as the following conditional probability
\begin{align}
\label{conditional6a}
\mathbb{P} \Big( \,
f(\theta_{1I},\nu_{1I} : I\in \mathcal{I} ) =0
\,\Big|\, 
g(\theta_{0I},\nu_{0I} : I\in \mathcal{I} ) =0
\,\Big) =0 
\end{align}
where $f,g$ are elementary (in particular continuous) functions on its domains defined on different variables. Recall that the random variables $\theta_{I},\nu_I : I\in \{0,1\}^n$ were chosen independently with continuous probability density distributions described in \cref{RanVar2}, i.e $\nu_I$ are chosen independently according to the uniform distribution $\Theta (\nu_I )=\frac{1}{2 \pi} \nu_I$ on $[0,2 \pi]$, and $\theta_I$ are chosen according to 
\begin{equation}
\label{IndChoss22a}
\Omega_I \Big( \theta_I\Big)=
\big(2^{n+1}\,+2\,|I|\big) \, \text{cos} \;\theta_I \;\text{sin} \;\theta_I^{2^{n+1}\,+2\,|I|\,\,-1}
\end{equation}
distribution on $[0,\frac{\pi}{2} ]$. Therefore, a random quantum state $\ket{\psi_0}$ which satisfies condition $g(\theta_{0I},\nu_{0I} : I\in \mathcal{I} ) =0$ is given by randomly chosen values of $\theta_I,\nu_I : I\in \{0,1\}^n$ with respect to the probability distributions in \cref{IndChoss22a} for which $g(\theta_{0I},\nu_{0I} : I\in \mathcal{I} ) =0$. It does not impose, however, any further constrains on $\nu_{1 I} \,\theta_{1I}$ coefficients. Hence the value of $f(\theta_{1I},\nu_{1I} : I\in \mathcal{I} )$ under condition $g(\theta_{0I},\nu_{0I} : I\in \mathcal{I} ) =0$ does not vanish except of measure zero subspace for such induced probability distribution, hence $\mathbb{P} ( D_1=0 |  D_0=0 )=0$. An analogous argument shows that $\mathbb{P} ( D_0=0 |  D_1=0 )=0$, and hence that \cref{conditional4a} holds true, which finishes the proof of \cref{lemma1odd}.
\end{proof}

Below, we prove the result analogous to \cref{lemma1odd}, for any odd number of qubits. 

\begin{lemma}[Odd case]
\label{lemma1even}
Consider a generic pure quantum state $\ket{\psi}\in \mathcal{H}^{\otimes n+1}_2$ of odd number of qubits written in the form $\ket{\psi}= \ket{0}\ket{\psi_0}+\ket{1}\ket{\psi_1}$  where $\ket{\psi_0},\ket{\psi_1}\in \mathcal{H}^{\otimes n}_2$, and a SLIP measure $ E^{(n)}$ presented in \cref{(2k)}. The conditional probability that the polynomial equation $ E^{(n)} \big(\ket{\psi_0}+z\ket{\psi_1}\big)=0$ has no multiple root at $z=0$ under the condition that it has at least single root at $z=0$ is zero.
\end{lemma}

\begin{proof}
Consider an $n$-qubit state $\ket{\psi}\in \mathcal{H}^{\otimes n}_2$ written in the form $\ket{\psi}= \ket{0}\ket{\psi_0}+\ket{1}\ket{\psi_1}$, and suppose that $n=2k$ is an even number. We shall evaluate the value of a measure $E^{(2k)}$ presented in \cref{(2k)} on the following state $\ket{\psi_0}+z\ket{\psi_1}$. Thus, we have
\begin{align}
\label{sandyBeach2}
E^{(2k)} &\big(\ket{\psi_0}+z\ket{\psi_1}\big) =\\
\nonumber &
\Bigg\vert 
\sum_{j,i=\Id,x,y,z} \eta_j \nu_i \;
\Big(
\bra{\psi_0}\sigma_j \otimes \sigma_i \otimes \sigma_y \otimes\cdots\otimes \sigma_y
\ket{\bar{\psi_0}} \\
\nonumber
&+2 z  \bra{\psi_1}\sigma_j \otimes \sigma_i \otimes \sigma_y \otimes\cdots\otimes \sigma_y
\ket{\bar{\psi_0}} 
\\
\nonumber
&+z^2 \bra{\psi_1}\sigma_j \otimes \sigma_i \otimes \underbrace{\sigma_y \otimes\cdots\otimes \sigma_y}_{2k} 
\ket{\bar{\psi_1}} 
\Big)^2
\Bigg\vert 
\nonumber
\end{align}
Observe that by expanding it further, it is an absolute value of a degree four polynomial in the $z$ variable:
\begin{equation*}
E^{(2k)} \big(\ket{\psi_0}+z\ket{\psi_1}\big) =
\big\vert 
C_0 +C_1 z+ C_2 z^2 +C_3 z^3+ C_4 z^4
\big\vert 
\end{equation*} 
where the coefficients are of the following form:
\begin{align*}
C_0
&= 
\sum_{j,i=\Id,x,y,z} \eta_j \nu_i\;
\bra{\psi_0}\sigma_j \otimes \sigma_i \otimes \sigma_y \otimes\cdots\otimes \sigma_y
\ket{\bar{\psi_0}} 
\cdot\\
&\cdot
\bra{\psi_0}\sigma_j \otimes \sigma_i \otimes \sigma_y \otimes\cdots\otimes \sigma_y
\ket{\bar{\psi_0}} 
=
E^{(2k)} \big(\ket{\psi_0}\big),
\\
C_1
&= 2
\sum_{j,i=\Id,x,y,z} \eta_j \nu_i\;
\bra{\psi_1}\sigma_j \otimes \sigma_i \otimes \sigma_y \otimes\cdots\otimes \sigma_y
\ket{\bar{\psi_0}} 
\cdot\\
&\cdot
\bra{\psi_0}\sigma_j \otimes \sigma_i \otimes \sigma_y \otimes\cdots\otimes \sigma_y
\ket{\bar{\psi_0}} ,
\\
&\ldots 
\end{align*}
with $(\eta_\Id,\eta_x,\eta_y,\eta_z )=(\nu_\Id,\nu_x,\nu_y,\nu_z ) =( -1,0,1,1)$. Notice that the equation $ E^{(n)} \big(\ket{\psi_0}+z\ket{\psi_1}\big)=0$ has a root at $z=0$ iff the coefficient $C_0 =0$, and has a multiple root at $z=0$ iff coefficients $C_0 ,C_1=0$. Therefore we shall prove that the conditional probability
\begin{equation}
\label{conditional1bis}
\mathbb{P} \Big(C_1=0 \,|\, C_0=0 \Big) =0 
\end{equation}
that $C_1=0$ under the condition $C_0=0$ is zero. 

We can rewrite the coefficients $C_0 ,C_1$ in the following way
\begin{align*}
C_0 & = \braket{\psi_0 | A(\psi_0)}\,, \nonumber \\
C_1 & = \braket{\psi_1 | A(\psi_0)}\,,
\end{align*}
where $\ket{A(\psi_0)}$ is the following vector
\begin{equation}
\label{AB1}
\ket{A(\psi_0)} = A(\psi_0) \ket{\overline{\psi}_0}\,,
\end{equation}
defined by operator 
\begin{align*}
A(\psi_0) = \sum_{i,j=Id,x,y,z} &\left(\eta_j \eta_i \braket{\psi_0|\si_i \otimes \si_j \otimes \si_y \cdots \otimes \si_y |\overline{\psi}_0}\right) \\
&\cdot\si_i \otimes \si_j \otimes \si_y \cdots \otimes \si_y\,,
\end{align*}
Notice that $\ket{A(\psi_0)} $ is a function of state coefficients of $\ket{\psi_0}$ only. State coefficients of $\ket{\psi_1}$ are not related to the state coefficients of $\ket{\psi_0}$ in anyway by a norm, see \cref{EqStar,notimp2}, hence the space of solutions for the equation
\begin{equation}
C_1 = \braket{\psi_1 | A(\psi_0)} = 0
\end{equation}
is of zero measure iff $\ket{A(\psi_0)} \not\equiv 0$. Therefore, condition \cref{conditional1bis} can be rewritten as the conditional probability
\begin{equation}
\label{conditional2bis}
\mathbb{P} \Big(\ket{A(\psi_0)} \equiv 0 \,|\, 
\braket{\psi_0 | A(\psi_0)} =0 \Big) =0 
\end{equation}
that $\ket{A(\psi_0)} \equiv 0$ under the condition $\braket{\psi_0 | A(\psi_0)} =0$ is zero. Observe that in order to show that $\ket{A(\psi_0)}=A(\psi_0) \ket{\overline{\psi}_0}\not\equiv 0$ it is enough to show that the first coefficient of $\ket{A(\psi_0)}$ is non-vanishing, i.e. $\braket{0\ldots0|A(\psi_0)} \neq 0$. Therefore the following conditional probability
\begin{equation}
\label{conditional3bis}
\mathbb{P} \Big(\braket{0\ldots0|A(\psi_0)} = 0
\,|\, 
\braket{\psi_0 | A(\psi_0)} =0 \Big) =0 
\end{equation}
that $\braket{0\ldots0|A(\psi_0)} = 0$ under the condition $\braket{\psi_0 | A(\psi_0)} =0$ is zero implies \cref{conditional2bis}. In the remaining part of the proof, we shall show that \cref{conditional3bis} holds true.

For simplicity, we shall introduce the following notation:
\begin{equation*}
\be_{kp} = \braket{\psi_0|\si_k \otimes \si_p \otimes \si_y \cdots \otimes \si_y |\overline{\psi}_0}\,,
\end{equation*}
for $i,j=0,2,3$. It is easy to see that the condition $\braket{\psi_0 | A(\psi_0)} =0$ can be rewritten as 
\begin{align}
\label{cond1bis}
0&=\sum_{k,p} \eta_k \eta_p \be^2_{kp} = 
\\
\nonumber
&
\be^2_{00} - \be^2_{02} - \be^2_{03} - \be^2_{20} + \be^2_{22} + \be^2_{23} -\be^2_{30} +\be^2_{32}+\be^2_{33},
\end{align}
Furthermore, the operator $A(\psi_0)$ takes the following form
\begin{align}
\label{DoubleStar}
A(\psi_0) =& \left(\sum_{k,p} \eta_{k} \eta_{p} \be_{kp} \si_{k} \otimes \si_{p}\right) \otimes \si_y \otimes \cdots \otimes \si_y = \\
\nonumber
=&B \otimes \si_y \otimes \cdots \otimes \si_y\,,
\end{align}
where matrix $B$ is presented on \cref{fig:4}.

\begin{figure*}
\begin{equation*}
B =
\begin{pmatrix}
 \beta_{00}-\beta_{03}-\beta_{30}+\beta_{33} & i \left(\beta_{02}-\beta_{32}\right) & i \left(\beta_{20}-\beta_{23}\right) & -\beta_{22} \\
 -i \left(\beta_{02}-\beta_{32}\right) & \beta_{00}+\beta_{03}-\beta_{30}-\beta_{33} & \beta_{22} & i \left(\beta_{20}+\beta_{23}\right) \\
 -i \left(\beta_{20}-\beta_{23}\right) & \beta_{22} & \beta_{00}-\beta_{03}+\beta_{30}-\beta_{33} & i \left(\beta_{02}+\beta_{32}\right) \\
 -\beta_{22} & -i \left(\beta_{20}+\beta_{23}\right) & -i \left(\beta_{02}+\beta_{32}\right) & \beta_{00}+\beta_{03}+\beta_{30}+\beta_{33} \\
\end{pmatrix} \,.
\end{equation*}
\caption{The form of matrix $B$ from \cref{DoubleStar}.}
\label{fig:4}
\end{figure*}


We shall expand an $n$-qubit state $\ket{\psi_0}$ in the computational basis 
\begin{equation*}
\ket{\psi_0} = \sum_{j_1,\ldots,j_{n}=0}^n d_{j_1  \, \cdots \,  j_{n}} \ket{j_1 ,\, \ldots \, , j_{n}}\,.
\end{equation*}
Notice that $\be_{kp}$ takes the following form in the computational basis
\begin{align*}
\be_{kp} = &\sum_{\substack{j_1,\ldots,j_{n} \\i_1,\ldots,i_{n} }} d^{}_{j_1 \, \cdots \,  j_{n}} d_{i_1  \, \cdots \,  i_{n}} \braket{j_1|\si_k|i_1} \braket{j_2|\si_p|i_2} \cdot\\
&\cdot\braket{j_3, \, \ldots \, , j_{n}|\si_y \otimes\cdots \otimes \si_y |i_3 ,\,  \ldots \, , i_n}\,.
\end{align*}
We can further simplify the expression for $\be_{kp}$:
\begin{align}
\label{besimplebis}
\be_{00} & = (-1)^{k-1}(D_{00} \,+\, D_{01}\,+\,D_{10} \,+\, D_{11}),
\\
\nonumber
\be_{03} & = (-1)^{k-1}(D_{00} \,-\, D_{01}\,+\,D_{10} \,-\, D_{11}), 
\\
\nonumber
\be_{30} & = (-1)^{k-1}(D_{00} \,+\, D_{01}\,-\,D_{10} \,-\, D_{11}), 
\\
\nonumber
\be_{33} & = (-1)^{k-1}(D_{00} \,-\, D_{01}\,-\,D_{10} \,+\, D_{11}), 
\\
\nonumber
\be_{22} & = 2(-1)^{k} (B_0-B_1),
\\
\be_{02} & = \be_{32}= \be_{20}= \be_{23}=0,
\nonumber
\end{align}
where we define the following quantities
\begin{align*}
& D_{ij}= \sum_{I \in \mathcal{I} } (-1)^{|I|} d_{ij\,I}d_{ij \bar{I}}\,,
\\ \nonumber
& B_{0}= \sum_{I \in \mathcal{I} } (-1)^{|I|} d_{00\,I}d_{11 \bar{I}}\,,
\\ \nonumber
& B_{1}= \sum_{I \in \mathcal{I} } (-1)^{|I|} d_{01\,I}d_{10 \bar{I}}\,,
\nonumber
\end{align*}
and $\mathcal{I}=\{0,1\}^{n-2}$ is the index set introduced for consistency of the notation, and for $I=(i_1,\ldots,i_{n-2})$, we define $\bar{I}= (\bar{i_1},\ldots,\bar{i_{n-2}})$ where $\bar{i_j} = i_j+1 \,(\text{mod}\, 2)$, and $|I|=\sum_{j=1}^{n-2} i_j$. With above notation at hand, \cref{cond1bis} becomes the following 
\begin{equation}
\label{purple}
8\,(D_{01}D_{10}\,+\, D_{11}D_{00})\,-\,2\,(B_0\, -\, B_1)
\, =\, 0.
\end{equation}
We shall single out $d_{0\cdots 0}$ state coefficient from the equation above. Notice that the state coefficient $d_{0\ldots 0}$ appears in $D_{00}$ and $B_0$ terms only. Denote by
\begin{align}
\nonumber
\bar{D}_{00} =& D_{00} \,-\, 2d_{0\cdots 0} d_{001\cdots 1},
\\
\nonumber
\hat{D}_{00} =& \bar{D}_{00}\,+\,2 d_{0\cdots 01} d_{001\cdots 10
},
\\
\nonumber
\bar{B}_{0} =& B_{0} \,-\,2 d_{0\cdots 0} d_{1\cdots 1},
\\
\hat{B}_{0} =& \bar{B}_{0} \,+\, 2d_{0\cdots 01} d_{1\cdots 10
}.
\label{AB2}
\end{align}
Observe that $d_{0\cdots 0}$ appears in $D_{00}$ and $B_0$, but does not appear in any of the following terms $\bar{D}_{00},D_{01},D_{10},D_{11}, \bar{B}_{0},B_{1}$. Hence \cref{purple} might be solved with respect to $d_{0\cdots 0}$ variable, and become
\begin{equation}
d_{0\ldots 0} =
\dfrac{4(D_{01} D_{10} +D_{11} \bar{D}_{00})-(\bar{B}_{0}-B_1)}{ 2 (d_{1\cdots 1}-4 D_{11} d_{001\cdots 1})}.
\label{regardless}
\end{equation}
In such a way, we expressed the condition $\braket{\psi_0 | A(\psi_0)} =0 $
which appears in \cref{conditional3bis} as an equation satisfied by the $d_{0\cdots 0}$ state coefficient. As a next step, we shall investigate thr following equation
\begin{equation}
\label{AB3}
\braket{0\ldots0|A(\psi_0)} = 0.
\end{equation}
From (\ref{DoubleStar}), (\ref{AB1}), and (\ref{besimplebis}), we have
\begin{equation*}
\braket{ A(\psi_0)|0\ldots0}=
4 D_{11} d_{001\cdots 1 } +2 (B_{0}- B_1) d_{1 \cdots 1} .
\end{equation*}
Notice that in the equation above the $d_{0\cdots 0}$ coefficient appears only in $B_0$ term. By expressing $B_0$ according to (\ref{AB2}), and substituting $d_{0\cdots 0}$ according to (\ref{regardless}), \cref{AB3} becomes
\begin{align}
\label{AB5}
0\,&=\,4 d_{001\cdots 1} D_{11} +8 d_{1\cdots 1}
\dfrac{}
{d_{1\cdots 1} -4 d_{001\cdots 1}D_{11}} \cdot
\\
\nonumber
&\cdot\Big({d_{1\cdots 1} D_{01} D_{10} +D_{11} (d_{001\cdots 1} (B_1 -\bar{B}_0 ) +d_{1\cdots 1} \bar{D}_{00})}
\Big)
.
\end{align}
Notice that in the equation above, the coefficient $d_{0\cdots 01}$ appears only in $\bar{D}_{00}$ and $\bar{B}_0$ terms. With the notation (\ref{AB2}), we can single out the $d_{0\cdots 01}$ coefficient and solve \cref{AB5} with respect to $d_{0\cdots 01}$ variable:
\begin{align}
\nonumber
d_{0\cdots 01}&=
\dfrac{2 d_{1\cdots 1}^2 D_{01} D_{10} +D_{11} }
{4d_{1\cdots 1}D_{11} (d_{001\cdots 10}d_{1\cdots 1}-d_{001\cdots 1}d_{1\cdots 10})}
\cdot\\
\label{DarkNightII}
&
\cdot
\Big((d_{001\cdots1} d_{1\cdots 1}(2 B_1+1)
\\
\nonumber
&-
2d_{001\cdots1} 
(d_{1\cdots 1} \hat{B}_0+ 2 d_{00 1\cdots 1}D_{11} ) +2 d_{1\cdots 1}^2 
\hat{D}_{00}\Big)
\end{align}
Notice that with \cref{regardless,DarkNightII} in hand, \cref{conditional3} can be rewritten as the following conditional probability
\begin{align}
\label{conditional4}
\mathbb{P} \Big( 
\text{ \cref{DarkNightII} holds }
\,\Big|\, 
\text{ \cref{regardless} holds }
\Big) =0 
\end{align}
that the probability that \cref{DarkNightII} holds under the condition that \cref{regardless} holds is zero. 

Note that \cref{regardless} expresses state coefficient $d_{0\cdots 0}$ as a function of other state coefficients, and it defines a subspace in the state space in which it holds. Intuitively, there should be no restriction on other state coefficients, as \cref{DarkNightII}, in such a subspace. Hence \cref{conditional4} should be satisfied. This intuition can be rigorously shown by properties of the Fubini–Study measure. In the remaining part of a proof, we show that \cref{conditional4} holds, indeed, true.
Notice that $d_{0 \cdots 0}$ is on the left side of \cref{regardless} and does not appear on the right side of \cref{regardless}. We shall recall now some properties of Fubini–Study distribution on the set $\mathcal{H}_2^{\otimes n+1}$ presented in \cref{EqStar,notimp2}. Recall that, $d_{0\cdots 0} $ is the only coefficient in $\ket{\psi_0}$ depending on $\theta_{0\cdots 0}$. Therefore, by multiplying \cref{DarkNightII} by $\prod_{I\in \mathcal{J}_0} \frac{1}{\text{sin}\, \theta_I}$, where we used notation $\mathcal{J}_0=\{0,1\}^{n} \setminus \{0,\ldots,0\}$, \cref{DarkNightII} takes the following form
\begin{equation}
\label{evFry}
e^{i\,\nu_{0\cdots 0}} \,\text{cos}\, \theta_{0\cdots 0} =
f(\theta_I,\nu_I : I\in \mathcal{J}_0),
\end{equation}
where $f$ is an elementary function in $\theta_I,\nu_I : I\in \mathcal{J}_0$ variables, where exact form of $f$ can be traced back from \cref{DarkNightII} by substituting successively \cref{DDADD} and \cref{notimp2}. In particular $f$, as an elementary function, is continuous on its domain. 
Similarly, the coefficient $d_{0 \cdots 01}$ is on the left side of \cref{regardless} and does not appear on the right side of \cref{regardless}, moreover there is no $d_{0 \cdots 0}$ coefficient in both sides of \cref{regardless}. As we noticed in \cref{notimp2}, while $d_{0\cdots 0},d_{0\cdots 01} $ are the only coefficients in $\ket{\psi_0}$ depending on $\theta_{0\cdots 01}$, therefore, by multiplying \cref{regardless} by $\prod_{I\in \mathcal{J}_1} \frac{1}{\text{sin}\, \theta_I}$, where we used notation $\mathcal{J}_1=\mathcal{J}_0 \setminus \{0,\ldots,0,1\}$, \cref{regardless} takes the following form
\begin{equation}
\label{evFry2}
e^{i\,\nu_{0\cdots 01}} \,\text{cos}\, \theta_{0\cdots 01} =
g(\theta_I,\nu_I : I\in \mathcal{J}_1),
\end{equation}
where $g$ is an elementary function in $\theta_I,\nu_I : I\in \mathcal{J}_1$ variables, where exact form of $g$ can be traced back from \cref{regardless} by substituting successively \cref{DDADD} and \cref{notimp2}. In particular $g$, as an elementary function, is continuous on its domain. 
In summary, \cref{conditional4} can be rewritten as the following conditional probability
\begin{align}
\label{conditional6}
\mathbb{P} \Big( \,
e^{i\,\nu_{0\cdots 01}}& \,\text{cos}\, \theta_{0\cdots 01} =
g(\theta_I,\nu_I : I\in \mathcal{J}_1)
\,\Big|\, 
\\
\nonumber
&
\,\Big|\, 
e^{i\,\nu_{0\cdots 0}} \,\text{cos}\, \theta_{0\cdots 0} =
f(\theta_I,\nu_I : I\in \mathcal{J}_0)
\,\Big) =0 
\end{align}
that the probability that $e^{i\,\nu_{0\cdots 01}} \,\text{cos}\, \theta_{0\cdots 01} =g(\theta_I,\nu_I : I\in \mathcal{J}_1)$ under the condition that $e^{i\,\nu_{0\cdots 0}} \,\text{cos}\, \theta_{0\cdots 0} =f(\theta_I,\nu_I : I\in \mathcal{J}_0)$ is zero, where $f,g$ are elementary (in particular continuous)  functions on its domains in $\theta_I,\nu_I : I\in \mathcal{J}_0$ and $\theta_I,\nu_I : I\in \mathcal{J}_1$ variables respectively. Recall that the random variables $\theta_I,\nu_I : I\in \mathcal{J} $ were chosen independently with continuous probability density distributions described in \cref{RanVar2}, i.e $\nu_I$ are chosen independently according to the uniform distribution $\Theta (\nu_I )=\frac{1}{2 \pi} \nu_I$ on $[0,2 \pi]$, and $\theta_I$ are chosen according to 
\begin{equation}
\label{IndChoss22}
\Omega_I \Big( \theta_I\Big)=
\big(2^{n+1}\,+2\,|I|\big) \, \text{cos} \;\theta_I \;\text{sin} \;\theta_I^{2^{n+1}\,+2\,|I|\,\,-1}
\end{equation}
distribution on $[0,\frac{\pi}{2} ]$. Therefore, a random quantum state $\ket{\psi_0}$ which satisfies condition $e^{i\,\nu_{0\cdots 0}} \,\text{cos}\, \theta_{0\cdots 0} =f(\theta_I,\nu_I : I\in \mathcal{J}_0)$ is given by randomly chosen values of $\theta_I,\nu_I : I\in \mathcal{I}$ with respect to the probability distributions in \cref{IndChoss22} for which the norm of $f(\theta_I,\nu_I : I\in \mathcal{J}_0)$ is smaller then one, and coefficients $\nu_{0\cdots 0}, \theta_{0\cdots 0}$ uniquely determined by the value of function $f$. It does not impose, however, any further constrains on $\nu_{0\cdots 01} \,\theta_{0\cdots 01}$ coefficient in terms od $\theta_I,\nu_I : I\in \mathcal{J}_1$ coefficients (except of the norm of $f$ being sufficiently small). Hence the following equation $e^{i\,\nu_{0\cdots 01}} \,\text{cos}\, \theta_{0\cdots 01} =g(\theta_I,\nu_I : I\in \mathcal{J}_1)$ will be not satisfied except of measure zero subspace for such induced probability distribution. This shows that \cref{conditional5} holds true, and hence finishes the proof of \cref{lemma1odd}.
\end{proof}

\begin{proposition}
\label{propp33}
Consider a generic pure quantum state $\ket{\psi}\in \mathcal{H}^{\otimes n+1}_2$ written in the form $\ket{\psi}= \ket{0}\ket{\psi_0}+\ket{1}\ket{\psi_1}$  where $\ket{\psi_0},\ket{\psi_1}\in \mathcal{H}^{\otimes n}_2$. Furthermore, consider the polynomial equation
\begin{equation}
\label{polypoly3}
E^{(n)} \big(\ket{\psi_0}+z\ket{\psi_1}\big) =0
\end{equation}
in $z$ variable, where $E^{(n)}$ is a SLIP measure presented in \cref{(2k+1)} for an even number of qubits $n=2(k+2)$ and \cref{(2k)} for an odd number of qubits $n=2k+1$. For a generic state $\ket{\psi}\in \mathcal{H}^{\otimes n+1}_2$ the above polynomial equation has exactly four distinct roots with probability one. In other words, the set of states for which \cref{polypoly3} has less then four distinct roots is of measure zero with respect to the Fubini–Study measure. 
\end{proposition}

\begin{proof}[Proof \cref{propp33}]
We shall see that the statement of \cref{propp33} esily follows from \cref{lemma1odd} and \cref{lemma1even} and properties of Fubini–Study measure. Recall that Fubini–Study probability distribution on $\mathcal{H}_2^{\otimes n+1}$ is invariant under any unitary operation acting on the whole space $\mathcal{H}_2^{\otimes n+1}$. Hence the probability that \cref{polypoly3} has multiple roots is bigger or equal to the probability that \cref{polypoly3} has multiple root at $z=0$ under the condition that it has at least single root at $z=0$ is zero. Hence \cref{lemma1odd} and \cref{lemma1even} justify the statement.
\end{proof}

Observe that while using a SLIP measure defined in \cref{(2k+1),(2k)} (depending on the parity of the number of qubits $n$) in Procedure 1 (described in the main body of the paper) a generic pure $n$-qubit state has exactly four roots with probability one. Therefore, together with Proposition 1 from the main body of the paper (proven in \cref{22november} of the Supplementary Material), it shows that a single SLIP measure is enough to provide necessary and sufficient conditions for any two generic pure $n$-qubit states to be SLOCC-equivalent. Hence, we have the following corollary.

\begin{corollary}
\label{easyBusy}
A single SLIP measure, \cref{(2k+1)} for an even number of qubits $n=2(k+2)$ and \cref{(2k)} for an odd number of qubits $n=2k+1$, is enough to provide necessary and sufficient conditions for any two generic pure $n$-qubit states to be SLOCC-equivalent.
\end{corollary}

\noindent
\cref{easyBusy} can be written shortly as Proposition 2 from the main body of the paper.

\section{Unique Möbius transformation}
\label{11november}

It is well known that for a given tuple of three distinct points $z_1,z_2,z_3$ and any second tuple of such points $w_1,w_2,w_3$ on the extended complex plane $\hat{\mathbb{C}}$, there is a unique Möbius transformation $T$, which transforms one tuple into the other, i.e.
\[
T(z_1)=w_1, \quad
T(z_2)=w_2, \quad
T(z_3)=w_3\;
.
\]
There are several ways to determine the form of $T$. An explicit form can be found by evaluating the determinant \cite{TothMTbook}:
\begin{equation}
T(z):=
{\displaystyle
\text{det}
\begin{pmatrix}zw&z&w&1\\z_{1}w_{1}&z_{1}&w_{1}&1\\z_{2}w_{2}&z_{2}&w_{2}&1\\z_{3}w_{3}&z_{3}&w_{3}&1\end{pmatrix}
}
.
\end{equation}
This results into the following form of $T$
\begin{equation}
T(z)=\dfrac{az+b}{cz+d}
\end{equation}
where
\begin{align*}
{\displaystyle a=\det {\begin{pmatrix}z_{1}w_{1}&w_{1}&1\\z_{2}w_{2}&w_{2}&1\\z_{3}w_{3}&w_{3}&1\end{pmatrix}},}
&
\quad
{\displaystyle b=\det {\begin{pmatrix}z_{1}w_{1}&z_{1}&w_{1}\\z_{2}w_{2}&z_{2}&w_{2}\\z_{3}w_{3}&z_{3}&w_{3}\end{pmatrix}},}
\\
\nonumber
{\displaystyle c=\det {\begin{pmatrix}z_{1}&w_{1}&1\\z_{2}&w_{2}&1\\z_{3}&w_{3}&1\end{pmatrix}},}
&
\quad
{\displaystyle d=\det {\begin{pmatrix}z_{1}w_{1}&z_{1}&1\\z_{2}w_{2}&z_{2}&1\\z_{3}w_{3}&z_{3}&1\end{pmatrix}}}
.
\end{align*}
Note that the corresponding $\SLC$ operator is of the form $\mathcal{O}= \dfrac{1}{N} {\begin{pmatrix}a&b\\c&d\end{pmatrix}}$ with the normalization constant
\begin{align*}
N= \sqrt{\det {\begin{pmatrix}a&b\\c&d\end{pmatrix}}}& =
\sqrt{(z_{1}-z_{2})(z_{1}-z_{3})(z_{2}-z_{3})}\cdot\\
&\cdot\sqrt{(w_{1}-w_{2})(w_{1}-w_{3})(w_{2}-w_{3})}
.
\end{align*}

\section{Rotation group $\mathcal{G}_{24}$}
\label{NorFor}
Consider the set of four symmetrically related points $\Phi=\{z,\frac{1}{z},-z,-\frac{1}{z} \}$.
It is very convinient to associate with them the cuboid spanned by eight points:
\[
\Phi\cup\bar{\Phi}=
\Big\{z,\frac{1}{z},-z,-\frac{1}{z} , \bar{z},\frac{1}{\bar{z}},-\bar{z},-\frac{1}{\bar{z}} \Big\},
\]
as it is presented on \cref{G24}. Observe, that all six faces of the cuboid are parallel to one of the planes: $XZ$,$XY$, or $YZ$.
In fact, this property is equivalent to the initial assumption that the set of points $\Phi$ is in normal form.
Clearly, all rotations of the Bloch ball preserve the form of the cuboid.
Nevertheless, only a special subgroup of all rotations preserve faces of the cuboid being parallel to $XZ$,$XY$, or $YZ$.
This special subgroup $\mathcal{G}_{24}$ contains $24$ elements spanned by three rotations of $\pi /2$ around $X$, $Y$, and $Z$ axis, given by:
\begin{align}
\mathbf{R}_x ({\pi}/{2}) = &
\begin{psmallmatrix}\text{cos} \;\pi /4  & -i\;\text{sin}\;\pi /4\\-i\;\text{sin}\;\pi /4 & \text{cos}\;\pi /4\end{psmallmatrix}
=\frac{1}{\sqrt{2}}
\begin{psmallmatrix}1 & -i\\-i &1 \end{psmallmatrix}
,\; \label{Indeed} \\
\mathbf{R}_y ({\pi}/{2}) = &
\begin{psmallmatrix}\text{cos} \;\pi /4  & -\text{sin}\;\pi /4\\\text{sin}\;\pi /4 & \text{cos}\;\pi /4\end{psmallmatrix}
=\frac{1}{\sqrt{2}}
\begin{psmallmatrix}1 & -1\\1 &1 \end{psmallmatrix}
,\; \label{Indeed2} \\
\mathbf{R}_z ({\pi}/{2}) = &
\begin{psmallmatrix}e^{-i \pi /4 } & 0\\0 & e^{i \pi /4 }\end{psmallmatrix}
=\frac{1}{\sqrt{2}}
\begin{psmallmatrix}1-i  & 0\\0 & 1+i \end{psmallmatrix} \label{Indeed3}
\end{align}
In fact, this is a group of rotations preserving the regular cube (the group of orientable cube symmetries).
Clearly, all rotations in the $\mathcal{G}_{24}$ group preserve the normal-form structure of $\Phi$.
On the other hand, the normal form is uniquelly determined up to $24$ rotations in the $\mathcal{G}_{24}$ group.

\section{Proof of \cref{24}}
\label{secNorForm}

We present a proof of \cref{24}. 
For each complex number $\lambda$ there exists another complex number $z$, such that the cross-ratio of the four points is equal to $\lambda$, i.e.
\begin{equation}
\Big(z,\frac{1}{z};-z,-\frac{1}{z} \Big)=
\lambda\,.
\label{CRvia}
\end{equation}
Indeed, the cross-ratio on the left side equals ${4 z^2} /{(1+z^2)^2}$, and the equation ${4 z^2} /{(1+z^2)^2}=\lambda$ has exactly four solutions
\begin{equation}
\label{eq23}
z_0 =\frac{4-2\lambda + \sqrt{1-\lambda}}{2\lambda}
,\; \frac{1}{z_0},\;-z_0,\;-\frac{1}{z_0}.
\end{equation}
Therefore, for a given value $\lambda$ there exists a unique $\lambda$-normal system, such that the cross-ratio of its vertices is given by $(z_0,\frac{1}{z_0};-z_0,-\frac{1}{z_0} )=\lambda$.
Replacing the vertex $z_0$ by any other vertex $z_0,{1}/{z_0},-z_0$, or $-{1}/{z_0}$ does not change the value of the cross-ratio $(z_0,\frac{1}{z_0};-z_0,-\frac{1}{z_0} )=\lambda$. Note that there exists a unique Möbius transformation $T$ which maps $z_1,z_2,z_3$ onto $z_0, {1}/{z_0},-z_0$, with the remaining $z_4$ mapped onto $-{1}/{z_0}$. Observe as well that the value of $z_0$ is unique up to its inverse, opposite and opposite inverse elements, according to \cref{eq23}, with the corresponding Möbius transformations associated to the matrices $T, \sigma_x T ,\sigma_y T$ and $\sigma_z T$.
Each of those transformations maps the set of points $\{z_1,z_2,z_3,z_4\}$ onto the same set of points $\{z_0, {1}/{z_0},-z_0,-{1}/{z_0}\}$, although the exact bijection between those two sets of roots is different for each transformation.

Depending on the order of four points $\{z_1,z_2,z_3,z_4\}$, the corresponding cross-ratio takes six values:
$\lambda, \frac{1}{\lambda}, 1-\lambda ,\frac{1}{1-\lambda},\frac{\lambda-1}{\lambda}$ and $\frac{\lambda}{\lambda-1}$.
For each of these, there is a corresponding set of solutions of the form $\{z_0, {1}/{z_0},-z_0,-{1}/{z_0}\}$ via \cref{eq23} with four related Möbius transformations. Therefore, there are in total 24 Möbius transformations that map any four non-degenerated points onto a normal system, each of them related by an element of the group $\mathcal{G}_{24}$ which has exactly $24$ elements.


\section{Transformation of a state into its Normal form}
\label{NorForEx}
We illustrate the procedure to determine the normal form of 4-qubit states  on the following example:
\begin{align}
\label{1101}
\ket{\psi}&\propto
\ket{0}\ket{\text{GHZ}}+\ket{1}\ket{\text{W}}\\
\nonumber
&=
\ket{0000}+\ket{0111}+\ket{1100}+\ket{1010}+\ket{1001} \\
\nonumber
&=
\ket{0}\underbrace{\ket{000}+\ket{111}}_{\ket{\psi_0}}+
\ket{1}\underbrace{\ket{100}+\ket{010}+\ket{001}}_{\ket{\psi_1}}.
\end{align}
of the widely discussed four-partite state \cite{Osterloh3,OsterlohWernerStates}. We shall focus attention on the first subsystem. Corresponding states $\ket{\psi_0}$ and $\ket{\psi_1}$ are indicated on \cref{1101}. We shall use the 3-tangle measure. The following polynomial
$\tau{(3)} (\ket{\psi_0}+z \ket{\psi_1})$
is of a degree four in variable $z$, and has exactly four distinct roots:
\begin{align*}
z_1 &=0, \\
z_2 &=-\sqrt[3]{4}, \\
z_3 &=-\sqrt[3]{4} \; e^{2 \pi i /3}, \\
z_4 &=-\sqrt[3]{4} \; e^{4 \pi i /3} .
\end{align*}
The corresponding value of the cross-ratio is equal to $\lambda = 1+ e^{4 \pi i /3}$. As it is shown in \cref{24}, there is a unique Möbius transformation $T$ which maps $z_1,z_2,z_3,z_4$ onto the system $z_0,\frac{1}{z_0},-z_0,-\frac{1}{z_0} $ where $z_0$ is a root of a polynomial ${4 z^2} /{(1+z^2)^2}=\lambda$. Choose one of its roots, e.g. $z_0= 1+ \sqrt{2}$. \cref{22Jun} presents one way of obtaining the exact transformation $T$, giving
\begin{equation}
T(z)=
\dfrac{z-\frac{1+\sqrt{3}}{\sqrt[3]{4}}}{\frac{(1+\sqrt{3})(1+i)}{2} z+\frac{1+i}{\sqrt[3]{4}}}
\end{equation}
which performs the mapping
\[
T(z_1)=z_0= 1+ \sqrt{2}, \quad
T(z_2)=\frac{1}{z_0}, \quad
T(z_3)=-z_0.
\]
According to \cref{24}, $T(z_4)= -\frac{1}{z_0}$ and hence $T$ maps $z_1,z_2,z_3,z_4$ into the normal system of roots. Note that the corresponding $\SLC$ operator is of the form
\begin{equation}
\label{eqww}
{\displaystyle
\mathcal{O}_1 =
\dfrac{\sqrt[3]{2}}{(1+i) (3+\sqrt{3})}
{\begin{pmatrix}
1&
-\frac{1+\sqrt{3}}{\sqrt[3]{4}}\\
\frac{(1+\sqrt{3})(1+i)}{2}&
\frac{1+i}{\sqrt[3]{4}}
\end{pmatrix}}
,
}
\end{equation}
which is presented on \cref{MobiusTrans}. 
Similar calculations for three remaining subsystems lead to the following SLOCC operator
\begin{equation}
\mathcal{O} =\mathcal{O}_1 \otimes \mathcal{O}_2 \otimes\mathcal{O}_3 \otimes \mathcal{O}_4
\end{equation}
where $\mathcal{O}_1$ is presented on \cref{eqww} and $\mathcal{O}_2 = \mathcal{O}_3 =\mathcal{O}_4 := \mathcal{O}_1 \sigma_x$, which transforms state (\ref{1101}) into the $\ket{\mathcal{G}_{abcd}}$ form
\begin{align*}
\ket{G_{abcd}} &=\\
&
\tfrac{a+d}{2} \big(\ket{0000}+\ket{1111} \big)+
\tfrac{a-d}{2} \big(\ket{0011}+\ket{1100} \big)
\\
+&
\tfrac{b+c}{2}  \big(\ket{0101}+\ket{1010} \big)+
\tfrac{b-c}{2} \big(\ket{0110}+\ket{1001} \big)
\end{align*}
with parameters
\begin{align*}
a=& (-8 + 4 i) - \frac{(12 - 8 i)}{\sqrt{3}},\\
b=& \frac{8}{3} i (3 + 2 \sqrt{3}),\\
c=& 0,\\
d=& -\frac{4}{3} ((6+3i) + (3+2i) \sqrt{3}).
\end{align*}

\section{Proof of \cref{propGabcd}}
\label{GabcdProof}

Consider the state $\ket{G_{abcd}}$ and its decomposition with respect to the first subsystem $\ket{G_{abcd}} =\ket{0}\ket{\psi_0} +\ket{1}\ket{\psi_1}$, where
\begin{align*}
\ket{\psi_0} &= \tfrac{a+d}{2}\ket{000} +\tfrac{a-d}{2}\ket{011} +\tfrac{b+c}{2}\ket{101} +\tfrac{b-c}{2}\ket{110}\,, \\
\ket{\psi_1} &= \tfrac{a+d}{2}\ket{111} +\tfrac{a-d}{2}\ket{100} +\tfrac{b+c}{2}\ket{010} +\tfrac{b-c}{2}\ket{001}\,.
\end{align*}
Suppose that $\tau^{(3)} (\zeta \ket{\psi_0} +\ket{\psi_1} )=0$.
Since $\tau^{(3)}$ is a $\SLC^{\otimes 3}$ invariant, for any local operators $\mathcal{O}_1$, $\mathcal{O}_2$, $\mathcal{O}_3$ we have
\[
\tau^{(3)} \Big( (\mathcal{O}_1\otimes\mathcal{O}_2\otimes\mathcal{O}_3)
\big(\zeta\ket{\psi_0} +\ket{\psi_1} \big)\Big)=0\,.
\]
Observe that
\begin{align*}
\ket{\psi_0}=&(\sigma_x \otimes \sigma_x \otimes \sigma_x) \ket{\psi_1}\,,\\
\ket{\psi_1}=&(\sigma_x \otimes \sigma_x \otimes \sigma_x) \ket{\psi_1}\,,
\end{align*}
where $\sigma_x,\sigma_y,\sigma_z$ are Pauli matrices.
Therefore by taking all local operators $\mathcal{O}_1, \mathcal{O}_2, \mathcal{O}_3$ equal to $\sigma_x$, we may conclude that
\begin{align*}
0=\tau^{(3)} \Big( (\sigma_x &\otimes \sigma_x \otimes \sigma_x)
\big(\zeta\ket{\psi_0} +\ket{\psi_1} \big)\Big)=
\\
\nonumber
&=
\zeta\ket{\psi_1} +\ket{\psi_0}
\propto
\frac{1}{\zeta}\ket{\psi_0}+\ket{\psi_1},
\end{align*}
hence $1/\zeta$ is another root of $\tau^{(3)}$. Similarly, by considering $(\sigma_y \otimes \sigma_y \otimes \sigma_y)$ and $(\sigma_z \otimes \sigma_z \otimes \sigma_z)$, one may find another two roots $-\zeta, \;-1/\zeta$ of $\tau^{(3)}$.
This shows that the roots of $\tau^{(3)}$ evaluated on any state from the $G_{abcd}$ family are symmetrical with respect to rotations around $X,Y,Z$ axes by the angle $\pi$.
Writting $\tau^{(3)} (z \ket{\psi_0} +\ket{\psi_1} )=0$ explicitely, we obtain the equation
\begin{equation*}
\tau^{(3)} (z \ket{\psi_0} +\ket{\psi_1} ) = A z^4 - 2(2 B+A)z^2 + A = 0,
\label{eqq}
\end{equation*}
where $A = (b^2 - c^2)  (a^2 - d^2)$ and $B = (c^2-d^2)  (a^2 - b^2)$. The above equation is non-degenerated iff $A,B,A+2B\neq 0$, which happens iff $a^2 \neq b^2 \neq c^2 \neq d^2 $ are pairwise different.

\begin{lemma}
Any local operator
$\mathcal{O}=
\mathcal{O}_1 \otimes \mathcal{O}_2 \otimes \mathcal{O}_3 \otimes \mathcal{O}_4 \in \SLC^{\otimes 4}$
which transforms states
$\ket{G_{a' b'c'd'}} \propto \mathcal{O} \ket{G_{a bcd}}$ with $a^2 \neq b^2 \neq c^2 \neq d^2 $, is of the form $\mathcal{O}_i \in \mathcal{G}_{24} $.
\end{lemma}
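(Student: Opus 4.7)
The plan is to apply \cref{T1} to each qubit cut separately, combined with the normal-form classification of \cref{24}. The key input --- which extends the analysis of Section 3 --- is that for a $G_{abcd}$ state with pairwise-distinct $a^2,b^2,c^2,d^2$, the four roots of $\tau^{(3)}$ obtained from tracing out any single qubit form a normal system. Once this is established for both $\ket{G_{abcd}}$ and $\ket{G_{a'b'c'd'}}$, \cref{T1} forces each $\mathcal{O}_i$ to induce a Möbius transformation between two normal systems, and \cref{24} then restricts such transformations to $\mathcal{G}_{24}$.

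I would begin with the normal-form extension. Section 3 proves the claim for the first cut by noting that the three Pauli triples $\sigma_x^{\otimes 3},\sigma_y^{\otimes 3},\sigma_z^{\otimes 3}$ exchange $\ket{\psi_0}\leftrightarrow\ket{\psi_1}$, forcing the quartic in $z$ to be the even biquadratic \cref{eqq} with four roots $\pm\zeta,\pm 1/\zeta$. The $G_{abcd}$ state enjoys the qubit-permutation symmetries associated with the Weyl group of Cartan type $D_4$ mentioned after \cref{propGabcd}, and these act by relabelling $(a,b,c,d)$ while permuting the four qubits. Invoking such a symmetry to move the $k$-th qubit into the first position reduces the $k$-th cut to the first-cut setup, so the same biquadratic structure governs the roots on every cut, with pairwise distinctness of the (possibly relabelled) coefficients keeping \cref{eqq} non-degenerate.

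With a normal system of roots in place on every cut of both states, the conclusion is essentially immediate: on the $i$-th cut, \cref{T1} implies that $\mathcal{O}_i$ acts as an inverse Möbius transformation $\zeta\mapsto\frac{d\zeta-b}{-c\zeta+a}$ carrying one normal system into another, while \cref{24} --- which states that any Möbius transformation from a four-point system to its associated normal form is unique up to $\mathcal{G}_{24}$ --- forces the full transformation between the two normal systems to lie itself in $\mathcal{G}_{24}$. Hence $\mathcal{O}_i\in\mathcal{G}_{24}$. I expect the main obstacle to be this last identification: one must confirm that the Möbius transformation really belongs to $\mathcal{G}_{24}$ itself rather than merely to a coset of it, which amounts to showing that the two normal root systems coincide as sets once the allowed coefficient relabellings are taken into account.
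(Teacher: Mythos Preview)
Your approach is essentially identical to the paper's: apply \cref{T1} cut by cut, observe that both root systems are already in normal form, and invoke \cref{24} to conclude $\mathcal{O}_i\in\mathcal{G}_{24}$; the paper simply says ``similar analysis with respect to all other qubits'' where you spell out the $D_4$ permutation argument.

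Your closing worry is unfounded, though. \cref{24} asserts that any Möbius map carrying a given four-point set to \emph{some} normal system is determined up to left-composition by $\mathcal{G}_{24}$; since the identity already takes the normal system $\Lambda$ to a normal system (itself), every other such map---in particular the one induced by $\mathcal{O}_i$---must equal $g\cdot\mathrm{id}=g$ for some $g\in\mathcal{G}_{24}$. There is no coset ambiguity and no need for $\Lambda$ and $\Lambda'$ to coincide as sets.
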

\begin{proof}
A local operator $\mathcal{O}_1$ acting on the first qubit and transforming the state $ \ket{G_{abcd}}$ onto $\ket{G_{a'b'c'd'}}$, also transforms their systems of roots denoted as $\Lambda$ and $\Lambda '$, respectively, via the action of the corresponding Möbius transformation. Note that both systems $\Lambda$ and $\Lambda '$ are in the normal form, therefore, according to \cref{24}, we have that $\mathcal{O}_i \in \mathcal{G}_{24} $. A similar analysis with respect to all other qubits shows that $\mathcal{O}_2,\mathcal{O}_3,\mathcal{O}_4 \in \mathcal{G}_{24}$.
\end{proof}

This way, searching for SLOCC-equivalence between the states $\ket{G_{a bcd}}$ and $\ket{G_{a' b'c'd'}}$ becomes restricted to the search within the finite class of operators $\mathcal{O} \in \mathcal{G}_{24}^{\otimes 4}$. Since the group $\mathcal{G}_{24}$ has only 24 elements, one may numerically verify that there are exactly $8\times 24 =192$ states in the $G_{abcd}$ family which are SLOCC-equivalent to $\ket{G_{abcd}}$ by $\mathcal{O}\in \mathcal{G}_{24}^{\otimes 4}$.
For example, the following operation
\begin{align}
\label{tuples1}
\mathbf{R}_x (\tfrac{\pi}{2}) \otimes
\mathbf{R}_x (\tfrac{\pi}{2})\otimes
\mathbf{R}_x (\tfrac{\pi}{2}) \otimes
\mathbf{R}_x (\tfrac{\pi}{2}) &
\end{align}
transforms state $\ket{G_{abcd}}$ into $\ket{G_{-b-acd}}$.
This might be simply written as a transformation of a tuples of indices: the tuple $(a,b,c,d )$ is transformed into the tuple $(-b,-a,c,b)$. Similarly, the operators showed on the following right hand sides provide the corresponding transformations of the tuple $(a,b,c,d)$ on the left side:
\begin{align}
\label{tuples2}
\nonumber
\mathbf{R}_y (\tfrac{\pi}{2}) \otimes
\mathbf{R}_y (\tfrac{\pi}{2})\otimes
\mathbf{R}_y (\tfrac{\pi}{2}) \otimes
\mathbf{R}_y (\tfrac{\pi}{2}) &\, \longleftrightarrow \, (a,\blue{d},c,\blue{b}), \\
\nonumber
\mathbf{R}_z (\tfrac{\pi}{2}) \otimes
\mathbf{R}_z (\tfrac{\pi}{2})\otimes
\mathbf{R}_z (\tfrac{\pi}{2}) \otimes
\mathbf{R}_z (\frac{\pi}{2}) &\, \longleftrightarrow \, (\blue{-d},b,c,\blue{-a}), \\
\nonumber
\mathbf{R}_y ({\pi}) \otimes
\mathbf{R}_y ({\pi})\otimes
\bs{1} \otimes
\bs{1} &\, \longleftrightarrow \,
(a,\blue{-b},\blue{-c},d), \\
\nonumber
\mathbf{R}_x ({\pi}) \otimes
\mathbf{R}_x ({\pi})\otimes
\bs{1} \otimes
\bs{1} &\, \longleftrightarrow \,
(a,b,\blue{-c},\blue{-d}), \\
\nonumber
\mathbf{R}_y ({\pi}) \otimes
\bs{1} \otimes
\mathbf{R}_y ({\pi})\otimes
\bs{1} &\, \longleftrightarrow \,
(\blue{d},\blue{c},\blue{b},\blue{a}), \\
\mathbf{R}_x ({\pi}) \otimes
\bs{1} \otimes
\mathbf{R}_x ({\pi})\otimes
\bs{1} &\, \longleftrightarrow \,
(\blue{b},\blue{a},\blue{d},\blue{c})\,. \nonumber
\end{align}
Additionally, the tuples $ (a,b,c,d)$ and $(-a,-b,-c,-d)$ represent the same state. Note that any composition of the above operations also provides SLOCC equivalences between $\ket{G_{abcd}}$ states. The eight aforementioned transformations of tuples generate all permutations of the $a,b,c,d$ indices, together with the change of a sign of any two or all four indices. There are exactly $24$ permutations and for each permutation the signs can be matched in exactly $1+ {{4}\choose{2}} +1 =8$ ways. This gives in total $192$ tuples representing SLOCC equivalent states, which perfectly matches the numerical result.

Finally, another trivial manipulation with indices $a,b,c,d$ comes from multiplying by a global phase, which is an irrelevant operation due the fact that quantum states are elements of a projective space. This operation transforms the indices as
\[
(e^{i\theta} a,\;e^{i\theta} b,\; e^{i\theta} c,\; e^{i\theta} d ) \sim
(a,b,c,d)\,,
\]
resulting in the same quantum state for any real number $\theta \in [0,2 \pi)$. In particular, for $\theta=\pi$, we observe that system of opposite indices determines the same state as the initial one, i.e. $(-a,-b,-c,-d)\sim (a,b,c,d)$.

\section{Exact scenarios}
\label{24September}

\begin{table}[h!]
\renewcommand{\arraystretch}{1.7}
\begin{tabular}{|c||c|}
\hline
Number of
&
Value of  \\
subsystem
& cross-ratio \\ \hline
1& 0.8656 + 0.5008 i \\ \hline
2& 0.5 + 0.5906 i \\ \hline
3& 0.5 - 0.4844 i \\ \hline
4& 0.5 + 0.5397 i \\ \hline
5& 0.9231 + 0.3845 i \\ \hline
\end{tabular}
\caption{Cross-ratio of roots of $M$ measure calculated for each subsystem of $\ket{\psi} $ state rounded to the fourth decimal place.}
\label{TableCR}
\end{table}

In order to exemplify the viability of the results in this work, we present another non-trivial scenarios where our method is useful. 
We show how a single 4-qubit entanglement $\SLIP_4^4$ measure might be used to verify whether 5-qubit states are SLOCC equivalent. Among several $\SLIP_4^h$ measures defined for system of four qubits, the so-called $M$ measure has degree $4$ \cite{PolynomialInvariantsofFourQubits,Djokovic4qubits} and is defined as the determinant
\begin{equation}
M(\ket{\psi}):=
{\displaystyle
\text{det}
\begin{pmatrix}
c_{0000}&c_{1000}&c_{0010}&c_{1010}\\
c_{0001}&c_{1001}&c_{0011}&c_{1011}\\
c_{0100}&c_{1100}&c_{0110}&c_{1110}\\
c_{0101}&c_{1101}&c_{0111}&c_{1111}
\end{pmatrix}
}
\end{equation}
where $c_{i_1,i_2,i_3,i_4}$ are state coefficients, i.e
\[
\ket{\psi}=  \sum_{i_1,i_2,i_3,i_4=0}^1
c_{i_1,i_2,i_3,i_4} \ket{i_1,i_2,i_3,i_4}
.
\]
 We demonstrate now how the $M$ measure can be used to verify whether two given 5-qubit states are SLOCC-equivalent. As an example, consider the following 5-qubit state
\begin{align*}
\ket{\psi} =&
\ket{0 0 0 0 1}+
\ket{0 0 1 1 1} +
\ket{0 1 0 0 1} +
\ket{0 1 0 1 0} +
\ket{0 1 1 0 1} +\\&
\ket{1 0 0 0 0} +
\ket{1 0 0 1 0} +
\ket{1 0 1 0 1} +
\ket{1 0 1 1 0} +
\ket{1 1 0 1 1} +\\&
\frac{1}{2} \ket{1 1 1 0 0}.
\end{align*}
By permuting the five subsystems of the state $\ket{\psi}$, we obtain in total $5!=120$ five qubit states. In the following, we will show that such states are not pairwise SLOCC-equivalent. Indeed, for each subsystem of a state $\ket{\psi} $, we may calculate the roots of the $M$ measure according to Procedure 1. For any subsystem, there are four distinct roots. The values of the corresponding cross-ratios are presented in \cref{TableCR}. It is straightforward to show that no two values of cross-ratios corresponding to different subsystems (see \cref{TableCR}) are related by $\lambda, \frac{1}{\lambda}, 1-\lambda ,\frac{1}{1-\lambda},\frac{\lambda-1}{\lambda}$ or $\frac{\lambda}{\lambda-1}$. As a direct consequence of Theorem 1, one concludes that those states are not SLOCC-equivalent. Indeed, the states obtained from $\ket{\psi} $ by non-identical permutation of subsystems will have corresponding cross-ratios simultaneously permuted. Therefore, for some subsystems, the values of cross-ratios corresponding to any two such states are not related via $\lambda, \frac{1}{\lambda}, 1-\lambda ,\frac{1}{1-\lambda},\frac{\lambda-1}{\lambda}$ or $\frac{\lambda}{\lambda-1}$. As a consequence, there is no Möbius transformation relating the roots of any subsystem (see \cref{1Dec}), and hence there is no SLOCC operator which might relate those two states. A similar reasoning holds true for any pair of states obtained from $\ket{\psi}$ by permuting its subsystems by two distinct permutations.

Finally, we note that these results would otherwise be highly computationally demanding if standard procedures were applied.

\bibliography{Physics}
\end{document}